\setlist[enumerate,1]{label = \arabic*.,ref = \arabic*}
\def\supp{\textnormal{supp}}
\def\interior{\textnormal{int}}
\def\epsilon{\varepsilon}
\theoremstyle{definition}
\newtheorem{definition}{Definition}
\newtheorem{theorem}{Theorem}
\newtheorem{corollary}{Corollary}
\newtheorem{proposition}{Proposition}
\newtheorem{example}{Example}
\newtheorem{remark}{Remark}
\title{Privacy Mechanism Design based on\\Empirical Distributions} 
\begin{document}

\author{%
   \IEEEauthorblockN{Leonhard Grosse\IEEEauthorrefmark{1},
                     Sara Saeidian\IEEEauthorrefmark{1}\IEEEauthorrefmark{2},
                     Tobias J. Oechtering\IEEEauthorrefmark{1},
                      and Mikael Skoglund\IEEEauthorrefmark{1}%
   \IEEEauthorblockA{\IEEEauthorrefmark{1}%
                    KTH Royal Institute of Technology, Stockholm, Sweden,
                     \{lgrosse, saeidian, oech, skoglund\}@kth.se}% <-this % stops a space    
       \IEEEauthorblockA{\IEEEauthorrefmark{2}%
                    Inria Saclay, Palaiseau, France}}% <-this % stops a space  
\thanks{This work was supported by the Swedish Research Council (VR) under grants 2023-04787 and 2024-06615.}        
} 

\maketitle

\begin{abstract}
Pointwise maximal leakage (PML) is a per-outcome privacy measure based on threat models from quantitative information flow. Privacy guarantees with PML rely on knowledge about the distribution that generated the private data. In this work, we propose a framework for PML privacy assessment and mechanism design with empirical estimates of this data-generating distribution. By extending the PML framework to consider sets of data-generating  distributions, we arrive at bounds on the worst-case leakage within a given set. We use these bounds alongside large-deviation bounds from the literature to provide a method for obtaining distribution-independent $(\varepsilon,\delta)$-PML guarantees when the data-generating distribution is estimated from available data samples. We provide an optimal binary mechanism, and show that mechanism design with this type of uncertainty about the data-generating distribution reduces to a linearly constrained convex program. Further, we show that optimal mechanisms designed for a distribution estimate can be used. Finally, we apply these tools to leakage assessment of the Laplace mechanism and the Gaussian mechanism for binary private data, and numerically show that the presented approach to mechanism design can yield significant utility increase compared to local differential privacy, while retaining similar privacy guarantees.
\end{abstract}

%\begin{IEEEkeywords}
%information leakage, mechanism design, pointwise maximal leakage, quantitative information flow, prior distribution uncertainty, distribution estimation.
%\end{IEEEkeywords}

\section{Introduction}
\begin{figure*}
    \centering
    \includegraphics[scale=0.7]{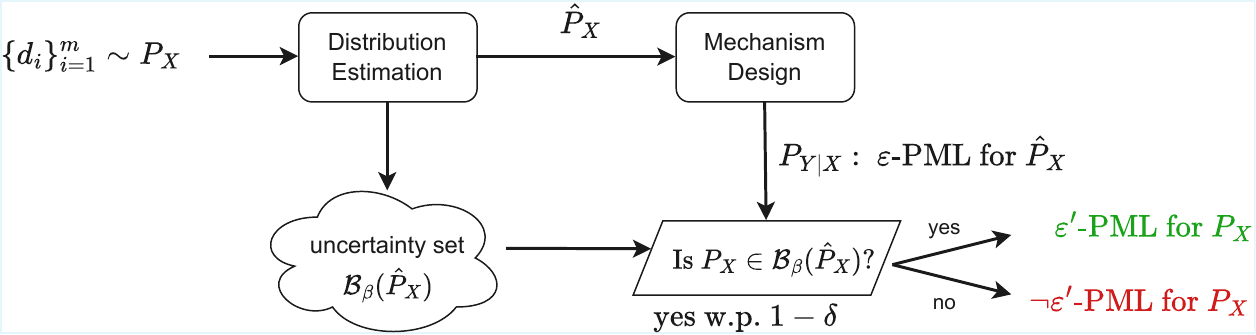}
    \caption{Procedure of obtaining $(\varepsilon,\delta)$-PML guarantees from distribution estimates. Estimation of $P_X$ from $m$ samples yields estimate $\hat P_X$ and an uncertainty set $\mathcal B_\beta(\hat P_X)$ around this estimate in which $P_X$ will lie with probability $1-\delta$. Then, a mechanism is designed to satisfy $\varepsilon$-PML w.r.t. the distribution estimate $\hat P_X$. We bound the PML of that mechanism in the set $\mathcal B_\beta(\hat P_X)$, and obtain $\varepsilon'>\varepsilon$ such that an $\varepsilon'$-PML guarantee will hold if the true distribution $P_X$ is in $\mathcal B_\beta(\hat P_X)$, that is, w.p. $1-\delta$.}
    \label{fig:flowchart}
\end{figure*}

The release of large amounts of data along with increasing computing power enables malicious actors to infer individuals' sensitive features from publicly available information \cite{deanonymNetflix,Alvim2022FlexibleAS}. System designers therefore seek to both measure this vulnerability and to implement effective disclosure control algorithms. 

Let $X$ be a random variable representing some private data, and let 
$P_{Y|X}$ be the \emph{privacy mechanism} which induces the random variable $Y$, representing the privatized version of $X$. 
In this model, \emph{local  differential privacy} (LDP) \cite{duchi2013LDPminmaxDEF} has become the standard tool for assessing privacy, adopted by Apple \cite{appleDP}, Google \cite{googleRAPPOR}, and many others due to its streamlined definitions and easy deployability.  
In the domain of computer security, the framework of \emph{quantitative information flow} (QIF) \cite{smith2009foundations,alvim2020science} was developed as a tool for the assessment of the disclosure risk of sensitive systems. The measures in the QIF framework are built upon adversarial threat models that encode various assumptions about the potential attacker. One such notion is \emph{min-entropy leakage}, which quantifies the guessing advantage of a computationally unbounded adversary that forms a guess $\hat X$ of $X$ after observing the output of a system. By maximizing this quantity over all possible data distributions of $X$, \citet{braun2009quantitative} derive a robust privacy measure later called \emph{maximal leakage}. In \cite{alvim2012measuring} it is shown that this measure can also be derived from a much more general adversarial model, in which an adversary aims to maximize a broad class of gain functions $g(X,\hat X)$. The same measure was introduced to the information theory community by \citet{IssaMaxL}. In their formulation, however, the threat model is adapted to consider an adversary that aims to guess any randomized function of the secret $X$. 

Common among all of the measures in the framework of QIF is the fact that leakage is measured \emph{on-average}. That is, the information leakage of a system is quantified by a single aggregate number. This can significantly weaken the privacy guarantees made with these measures, as e.g. shown in \cite[Section 1]{saeidian2023pointwise}: For outcomes that have low probability, even very large leakage values at this specific outcome might result in a low leakage value on average. This motivates the definition of \emph{pointwise maximal leakage} (PML) in \cite{saeidian2023pointwise}.

As the name suggests, PML is a pointwise adaptation of the threat models of maximal leakage.\footnote{A pointwise view of leakage in QIF referred to as \emph{dynamic min-entropy leakage} has previously been attempted in \cite{espinoza2013min}. However, the authors discard their attempt due to issues with its fundamental properties.} This pointwise formulation enables the treatment of information leakage as a random variable, a view that is also heavily adopted in the differential privacy literature. Further, \citet{saeidian2023inferential} show that these guarantees not only quantify the leakage with respect to the considered threat models, but also offer a clear interpretation of the privacy guarantee. More specifically, it is shown in \cite{saeidian2023inferential} that if a privacy mechanism satisfies $\varepsilon$-PML, the mechanism ensures that all features with min-entropy larger than $\varepsilon$ (that is, more informative features) will not be disclosed to any adversary, regardless of their subjective information. 

PML is a \emph{context-aware} privacy measure. That is, the PML of a mechanism does not only depend on the mechanism itself, but also on the distribution that generated the private data. It is further known that guaranteeing PML for \emph{all} possible data-generating distributions is equivalent to guaranteeing LDP \cite{IssaMaxL}, \cite{10664297}, where the worst-case privacy loss is attained at a degenerated data distribution (one symbol has probability one). Works like \cite{saeidian2023pointwise,grosse2023extremal} show that with the additional assumptions about $P_X$ in PML, the class of mechanisms satisfying a PML constraint is often significantly larger than the class of mechanisms satisfying any (finite) LDP constraint. Specifically, discrete mechanisms for PML can contain zero elements, a property that LDP does not allow for. As a result, mechanisms for PML are able to achieve higher utility compared to similar mechanisms for LDP, often by a large margin \cite{grosse2023extremal,saeidian2023inferential}.

It is argued in, e.g., \cite{10.1145/3433638,10.5555/3454287.3455674}, that the significant decrease in data utility associated with differentially private mechanisms makes LDP impractical for, e.g., some large-scale learning applications. With this issue in mind, the possibility of increased utility by using PML offers a promising alternative for obtaining rigorous privacy guarantees in some practical settings. For practical adoption, however, the dependence of PML on the data-generating distribution requires close attention. Both privacy assessment and mechanism design in previous works incorporated strong assumptions about the data-generating distribution $P_X$, often assuming $P_X$ to be perfectly known. If only partial information about $P_X$ exists, we may find a tradeoff somewhere in between assuming no knowledge about $P_X$ (LDP) and perfect knowledge of $P_X$ (previous works with PML): As outlined above, assuming \emph{no} knowledge about the data distribution can lead to highly conservative privacy assessments. After all, the implicitly considered worst-case distribution might often be very different from the true data-generating distribution. On the other hand, perfect knowledge of $P_X$ is often unavailable in practical applications.

In the context of information-theoretic compression, classical universal source coding approaches \cite[Chapter 13]{Polyanskiy_Wu_2025} face a similar issue, and successfully solve it by \emph{distribution estimation}. That is, in lack of knowledge about the distribution that generated a set of data points, that distribution is instead estimated from the data points, and compression schemes are built based on that distribution estimate. These approaches made entropy coding (which previously required exact knowledge of the data-generating distribution) directly applicable in practice, and have since been widely adopted in contemporary compression schemes (e.g., arithmetic coding in JPEG \cite{rabbani2002overview}). 

In a privacy context, \citet{diaz2019robustness} present results based on distribution estimation and large deviation bounds for multiple on-average privacy measures, including maximal $\alpha$-leakage \cite{liao2019tunable} and min-entropy leakage. Further, \citet{lopuhaa2024mechanisms} design privacy mechanisms for LDP restricted to a known set of distributions. In this work, we build on the work in \cite{grosse2023extremal}, where optimal privacy mechanisms for PML are designed assuming that the system designer knows the data-generating distribution exactly. In many practical scenarios, this is not the case, but the designer may estimate this distribution from observations. Following the approach typical for information-theoretic schemes, we aim to develop methods that allow for the usage of mechanisms designed for PML when only such an empirical estimate of the data-generating distribution is available to the system designer. If the private data is discrete, the presented results offer a way of integrating (partial) knowledge of the data-generating distribution into privacy mechanism design, and can therefore be used to enhance the utility of systems previously designed to satisfy local differential privacy (which assumes no distributional knowledge). The presented procedures are easy to implement, and we numerically show how mechanisms designed in our framework can directly be used in place of mechanisms for LDP, while significantly increasing utility. Hence, the techniques can be implemented to reduce the required noise variance and therefore improve data utility in settings in which some data from the population might be publicly available (e.g., via public medical datasets) or when a portion of a large dataset can be used for estimation. 

\subsection{Overview of Contributions and Outline}
In this section, we provide an informal summary of the key contributions in this work. Detailed derivations are presented in Sections~\ref{sec:approxPML},~\ref{sec:PMLforsets}, and~\ref{sec:optimalbinary}. In Section~\ref{sec:examples}, we further apply the presented tools by analyzing the Laplace and the Gaussian mechanism \cite{dwork2014algorithmic} for binary data. More discussion on related works is presented in Section~\ref{sec:relatedwork}. Detailed proofs of the theorems are presented in the Appendices. Section~\ref{sec:conclusion} concludes the paper. Our contributions are summarized as follows:
\begin{itemize}
    \item We extend the PML framework (previously defined for a \emph{single} data-generating distribution) to instead define privacy for \emph{sets} of distributions (Section \ref{sec:approxPML}).

    \item We analyze the behavior of PML privacy guarantees for these distribution sets, and thereby provide technical tools for leakage assessment when the data-generating distribution changes or is uncertain (Section \ref{sec:PMLforsets}).

    \item We provide a method for mechanism design with PML when the data-generating distribution is not known, but an empirical estimate is available (Section \ref{sec:mainmethod}).

    \item We present closed-form Pareto-optimal mechanisms for binary data, and a linearly constrained convex program for obtaining optimal mechanisms for general discrete distribution estimates (Section \ref{sec:optimalbinary}).

    \item We provide a detailed analysis of the binary Laplace and Gaussian mechanisms in the presented framework, and numerically demonstrate how mechanism design in our framework is able to increase data utility compared to LDP (Section \ref{sec:examples}).
\end{itemize}
The principled approach to mechanism design presented in this paper is illustrated in Figure \ref{fig:flowchart}. We remark that the privacy guarantees in our framework hold up to a probability of failure $\delta$.  However, we will later demonstrate that \emph{(i)} this $\delta$ vanishes with the increase of available samples for distribution estimation and \emph{(ii)} even under general assumptions, the rate at which $\delta$ vanishes is exponential in this sample size. We therefore argue that in settings where large amounts of data are available, our proposed approach can be directly applied to enhance the performance of systems that have previously been hindered by the limitations imposed by strict local differential privacy guarantees, while providing a similar privacy guarantee.
\begin{comment}
If the prior (data-generating) distributions is not know precisely, we characterize the behavior of the PML of a mechanism with respect to sets of prior distributions $\mathcal P$ that are subsets of the \emph{interior} of the full probability simplex, i.e., $\mathcal P \subset \text{int}(\mathcal P_{\mathcal X})$: We show that the PML of a mechanism is convex with respect to the prior distribution in such sets. As a immediate Corollary, we observe that the set of valid \emph{mechanisms} $\mathcal M (\varepsilon,\mathcal P)$ with respect to a set of priors is given by the intersection of the sets of valid mechanisms for each extreme point distribution $P_X^{(*,i)}$, i.e.,
\begin{equation}
    \mathcal M(\varepsilon,\mathcal P) = \bigcap_i \mathcal M\big(\varepsilon,P_X^{(*,i)}\big).
\end{equation}
\end{comment}

\section{Preliminaries and Notation}
\label{sec:prelim}
We use uppercase letters to denote random variables, lowercase letters to denote their realizations and caligraphic letter to denote (support) sets. Specifically, we use $X\in \mathcal X$ to denote a random variable representing some sensitive data that is to be privatized. If not specified otherwise, we consider discrete alphabets, and we say $P_X$ is the probability mass function of $X$. Further, we use $\mathcal P_{\mathcal X}$ to denote the set of all possible probability distribution on $\mathcal X$. We generally assume that the support of a distribution is known and fixed, and denote it by $\supp(P_X) = \mathcal X = \{1,\dots,N\}$. The set of all distributions with full support on $\mathcal X$ is the interior $\mathcal P_{\mathcal X}$, that is, $\text{int}(\mathcal P_{\mathcal X})$.

Given an outcome $x \in \mathcal X$ of the private random variable $X$, a \emph{privacy mechanism} $P_{Y|X}$ induces the privatized random variable $Y$ by the conditional probability distribution $P_{Y|X=x}$. Let $P_{XY}$ denote the joint distribution of $X$ and $Y$. Then we use $P_{XY} = P_{Y|X} \times P_X$ to imply that $P_{XY}(x,y) = P_{Y|X=x}(y)P_X(x)$ and $P_{Y} = P_{Y|X} \circ P_Y$ to denote the marginalization $P_Y(y) = \sum_{x \in \mathcal X} P_{Y|X=x}(y)P_X(x)$. 

For $N\in\mathbb N$, we let $[N] \coloneqq \{1,..,N\}$ be the set of all positive integers up to $N$, and $\log(\cdot)$ denotes the natural logarithm.
\subsection{Pointwise Maximal Leakage}
The pointwise maximal leakage (PML) privacy measure was recently proposed by \citet{saeidian2023pointwise} as a pointwise generalization of maximal leakage \cite{IssaMaxL,alvim2012measuring}. A key observation that motivates its definition is the fact that the on-average privacy formulation of maximal leakage can critically underestimate the disclosure risk in settings in which there exist outcomes of a privacy mechanism with low probability \cite[Section II]{saeidian2023pointwise}. The pointwise formulation of leakage to each outcome $Y=y$ in PML mitigates this shortcoming. It is shown in \cite{saeidian2023pointwise} that PML arises from two equivalent threat-models, namely the pointwise adaption of the threat model in \cite{alvim2012measuring}, where an adversary aims to maximize an arbitrary non-negative gain function, and the threat model in \cite{IssaMaxL}, in which the adversary aims to maximize the probability of correctly guessing some randomized function of the private data. We restate the more general gain function framework here.
Recall that $X$ is a random variable representing sensitive private data. 

\begin{definition}[Gain function view of PML {{\cite[Corr. 1]{saeidian2023pointwise}}}]
\label{def:PMLgainfunc}
    Let $X$ be a random variable defined on the finite set $\mathcal X$ distributed according to $P_X$. Let $Y$ be the random variable induced by the mechanism $P_{Y|X}$. Then the \emph{pointwise maximal leakage in the gain function view from $X$ to an outcome $y$} is defined as 
    \begin{equation}
        \ell_{P_{XY}}(X\to y) \coloneqq \log \, \sup_g \frac{\sup_{P_{W|Y}}\mathbb E[g(X,W)\mid Y=y]}{\max_{w\in\mathcal W}\mathbb E[g(X,w)]},
    \end{equation}
    where $\mathcal W$ is a finite set, and $g: \mathcal X \times \mathcal W \to \mathbb R_+$ is an arbitrary non-negative gain function.
\end{definition}
This definition can be interpreted as follows: Assume an adversary aims to guess the value of the secret $X$, and hence forms a guess $w\in \mathcal W$, where $\mathcal W$ represents some guessing space (e.g., one can take $\mathcal W = \mathcal X$ for illustration). The adversary then quantifies the quality of her guess by some non-negative gain function $g$. Through its arbitrariness, this gain function encompasses a wide array of attacks, including, e.g., membership inference attacks and reconstruction attacks (see \cite{saeidian2023pointwise} for examples of the corresponding gain functions). To quantify the information leaking through $P_{Y|X}$, Definition~\ref{def:PMLgainfunc} then compares the maximum expected \emph{posterior} gain $\sup_{P_{W|Y}}\mathbb E[g(X,W)|Y=y]$ of a guess after observing $Y=y$ with the maximum expected \emph{prior} gain $\max_{w \in \mathcal W}\mathbb E[g(X,w)]$ of a guess made without observing the outcome of the privacy mechanism. To obtain robustness with respect to the specific gain function used by the adversary, this ratio is then maximized over all functions $g$ mapping from $\mathcal X \times \mathcal W$ to the bounded non-negative real numbers. It can then be shown that this threat model definition admits a simple representation in terms of the Rényi-divergence of order infinity.
\begin{theorem}[{{\cite[Thm. 1]{saeidian2023pointwise}}}]
    Let $X,Y$ be random variables distributed according to the joint distribution $P_{XY}$. Let $P_X \in \interior (\mathcal P_{\mathcal X})$ be the marginal distribution of $X$. The pointwise maximal leakage according to Definition~\ref{def:PMLgainfunc} simplifies to
    \begin{equation}
        \ell_{P_{XY}}(X\to y) = D_\infty(P_{X|Y=y}||P_X),
    \end{equation}
    where $P_{X|Y=y}$ denotes the conditional distribution of $X$ given an outcome $Y=y$, and $D_\infty$ denotes the Rényi-divergence of order infinity \cite{renyi1961entropy}, that is, 
    \begin{align}
        D_\infty(P_{X|Y=y}||P_X) &= \log\,\max_{x\in\mathcal X}\,\frac{P_{X|Y=y}(x)}{P_X(x)} \\&= \log\,\max_{x\in\mathcal X}\,\frac{P_{Y|X=x}(y)}{P_Y(y)} ,
    \end{align}
    where $P_Y = P_{Y|X} \circ P_X$.
    \end{theorem}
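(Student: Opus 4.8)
The plan is to evaluate the supremum over gain functions in Definition~\ref{def:PMLgainfunc} explicitly: I will show that a change-of-measure argument gives the upper bound $\max_{x\in\mathcal X}P_{X|Y=y}(x)/P_X(x)$ for the ratio uniformly in $g$, that an explicit indicator gain function attains it, and then rewrite this quantity in the two claimed forms using the definition of $D_\infty$ and Bayes' rule.

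\textbf{Step 1: reduce the posterior optimization to deterministic guesses.} For a fixed $g$, the inner supremum $\sup_{P_{W|Y}}\mathbb E[g(X,W)\mid Y=y]$ is the supremum of the affine map $P_{W|Y=y}\mapsto\sum_{w}P_{W|Y=y}(w)\,\mathbb E[g(X,w)\mid Y=y]$ over the probability simplex on the finite set $\mathcal W$ (here $W-Y-X$ is a Markov chain, so given $Y=y$ one has $\mathbb E[g(X,W)\mid Y=y]=\sum_w P_{W|Y=y}(w)\sum_x P_{X|Y=y}(x)g(x,w)$). Hence it is attained at a vertex and equals $\max_{w\in\mathcal W}\sum_{x\in\mathcal X}P_{X|Y=y}(x)\,g(x,w)$. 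Thus the quantity inside the logarithm becomes $\sup_{g}\frac{\max_w\sum_x P_{X|Y=y}(x)g(x,w)}{\max_w\sum_x P_X(x)g(x,w)}$, where one may restrict to $g\not\equiv0$; since $P_X$ has full support on $\mathcal X$, the denominator is then strictly positive and the quotient is well defined.

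\textbf{Step 2: the two matching bounds.} For the upper bound, for every $g$ and every $w$ I write $P_{X|Y=y}(x)=\big(P_{X|Y=y}(x)/P_X(x)\big)P_X(x)$ and bound the first factor by $c\coloneqq\max_{x}P_{X|Y=y}(x)/P_X(x)$, obtaining $\sum_x P_{X|Y=y}(x)g(x,w)\le c\sum_x P_X(x)g(x,w)$; taking $\max_w$ of both sides shows the quotient is $\le c$ for every $g$. For the (attained) lower bound I pick $x^\star\in\argmax_{x}P_{X|Y=y}(x)/P_X(x)$ and the singleton guessing space $\mathcal W=\{w_0\}$ with $g(x,w_0)=\ind[x=x^\star]$: the numerator is $P_{X|Y=y}(x^\star)$ and the denominator is $P_X(x^\star)>0$, so this choice attains $c$ exactly. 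Hence the supremum equals $c$; taking logarithms and using $D_\infty(P\|Q)=\log\max_x P(x)/Q(x)$ (finite here because $\supp(P_{X|Y=y})\subseteq\mathcal X=\supp(P_X)$) gives the first equality, and substituting Bayes' rule $P_{X|Y=y}(x)=P_{Y|X=x}(y)P_X(x)/P_Y(y)$ (valid when $P_Y(y)>0$) cancels $P_X(x)$ and yields the second.

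\textbf{Main obstacle.} There is no deep obstacle; the only points requiring care are (i) the reduction from randomized to deterministic guesses, handled by the vertex argument in Step 1, and (ii) well-definedness of the ratio, which is exactly where the hypothesis $P_X\in\interior(\mathcal P_{\mathcal X})$ enters, guaranteeing all denominators are strictly positive. Everything else is the elementary change-of-measure inequality of Step 2 together with the explicit indicator gain function that meets it.
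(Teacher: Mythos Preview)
Your argument is correct. The paper itself does not prove this statement: it is quoted verbatim as \cite[Thm.~1]{saeidian2023pointwise} and used as background, so there is no in-paper proof to compare against. That said, the proof you give is exactly the standard one from the cited reference: reduce the randomized posterior guess to a deterministic one by linearity, then use the change-of-measure bound $\sum_x P_{X|Y=y}(x)g(x,w)\le c\sum_x P_X(x)g(x,w)$ with $c=\max_x P_{X|Y=y}(x)/P_X(x)$ for the upper bound, and the indicator gain $g(x,w_0)=\ind[x=x^\star]$ for tightness. Your handling of the two technical points---the vertex argument for the inner supremum and the use of $P_X\in\interior(\mathcal P_{\mathcal X})$ to keep all denominators positive---is clean and complete.
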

    When the mechanism $P_{Y|X}$ and the data distribution $P_X$ are clear from context, we occasionally write $\ell(X \to y)$ to mean $\ell_{P_{Y|X}\times P_X}(X\to y)$. Notably, PML satisfies the common desirable properties of privacy measures like pre-processing, post-processing and linear composition (see \cite{saeidian2023pointwise} for details). 
    \subsubsection{Privacy guarantees}
    Since PML considers privacy leakage to each individual outcome $y$ of the privacy mechanism, the leakage with respect to the random variable $Y$ becomes a random variable itself. Different privacy guarantees can be formulated by restricting the statistics of this random variable. The most stringent definition, which we will refer to as $\varepsilon$-PML, is defined as follows.
    \begin{definition}[Almost-sure guarantee, {{\cite[Def. 4]{saeidian2023pointwise}}}]
    \label{def:almostsurely}
        For any arbitrary but fixed data distribution $P_X$ and some $\varepsilon \geq 0$, a mechanism is said to satisfy \emph{$\varepsilon$-PML with respect to $P_X$} if 
        \begin{equation}
            \mathbb P_{Y\sim P_Y}\{\ell(X\to Y) \leq \varepsilon\} = 1.
        \end{equation}
    \end{definition}
    A slightly weaker probabilitstic guarantee can be defined by allowing the $\varepsilon$-bound to fail with probability $\delta$.
    \begin{definition}[Tail-bound guarantee {{\cite[Def. 5]{saeidian2023pointwise}}}]
    \label{def:tailbound}
        Given some arbitrary but fixed data distribution $P_X$, a mechanism $P_{Y|X}$ is said to satisfy $(\varepsilon,\delta)$-PML if 
        \begin{equation}
            \mathbb P_{Y\sim P_Y}\{\ell(X\to Y) \leq \varepsilon\} \geq 1-\delta.
        \end{equation}
    \end{definition}
    Note that according to Definitions~\ref{def:almostsurely} and~\ref{def:tailbound}, $(\varepsilon,0)$-PML is equivalent to $\varepsilon$-PML. Further, all mechanisms satisfy $(\varepsilon,1)$-PML, and (for discrete alphabets) $\varepsilon_{\max}$-PML, where $\varepsilon_{\max} = -\log(\min_xP_X(x))$ \cite{saeidian2023pointwise}.

In order to characterize the set of privacy mechanisms satisfying a certain $\varepsilon$-PML guarantee according to Definition~\ref{def:almostsurely}, we define
\begin{align}
    \varepsilon_{\min}&(P_{Y|X},P_X) \\&\coloneqq \inf\{\varepsilon \geq 0: \mathbb P[\ell_{P_X\times P_{Y|X}}(X\to Y) \leq \varepsilon] = 1\}
\end{align}
to be the minimum value of $\varepsilon$ such that the mechanism $P_{Y|X}$ satisfies $\epsilon$-PML with respect to the data distribution $P_X$. In the case of discrete alphabets, the privacy mechanisms can be written as row-stochastic matrices representing the transitioning kernels. For $N \in \mathbb N$, $M\leq N$, let $\mathcal S_{N\times M} \in [0,1]^{N\times M}$ be the set of all $N \times M$ row-stochastic matrices. We then write
\begin{equation}
    \mathcal M(\varepsilon,P_X) \coloneqq \biggl{\{}P_{Y|X}\in \bigcup_{M=1}^N \mathcal S_{N\times M}: \varepsilon_{\min}(P_{Y|X},P_X) \leq \varepsilon \biggr{\}}
\end{equation}
for the set of all mechanisms that satisfy $\varepsilon$-PML with respect to the data distribution $P_X$.\footnote{It is shown in \cite[Theorem 1]{grosse2023extremal} that the restriction $M\leq N$ does not have any effect on the achievable utility of the mechanisms in $\mathcal M(\varepsilon,P_X)$.} As shown in \cite[Lemma 1]{grosse2023extremal}, for any $\varepsilon \geq 0$, this set is a closed and bounded convex polytope. We extend this definition to \emph{sets} of data distributions $\mathcal P \subseteq \text{int}(\mathcal P_{\mathcal X})$ by requiring that mechanisms in $\mathcal M(\varepsilon,\mathcal P)$ satisfy $\varepsilon$-PML with respect to \emph{all} distributions in the set $\mathcal P$, that is,
\begin{equation}
\label{eq:MmathcalP}
    \mathcal M(\varepsilon,\mathcal P) = \bigcap_{P_X \in \mathcal P} \mathcal M(\varepsilon,P_X).
\end{equation}

\subsubsection{Extremal Mechanism}
\label{subsec:extremalmech}
The results in \cite{grosse2023extremal} show that so called \emph{extremal mechanisms} are optimal solutions to the privacy-utility tradeoff between $\varepsilon$-PML and \emph{sub-convex} utility functions for categorical data. These utility functions include all measures of dependence between $X$ and $Y$ represented by a function $U: \mathbb R_+^{|\mathcal X|\times |\mathcal Y|}\times \mathbb R_+^{|\mathcal X|} \to \mathbb R_+$ that can be written as
\begin{equation}
\label{eq:subconvexdef}
    U(P_{Y|X},P_X) = \sum_{y\in\mathcal Y}\mu\big([P_{Y|X=x}(y)]_x,P_X\big),
\end{equation}
where $\mu: \mathbb R_+^{|\mathcal X|} \times \mathbb R^{|\mathcal X|}\to \mathbb R_+$ is a \emph{sub-linear} function.\footnote{A function is said to be sub-linear if it is homogenuous and convex. The definition of sub-convex utility functions was originally presented in \cite{extremalmechanismLong}.} This class of functions notably includes the mutual information $I(X;Y)$ as well as many other common dependence measures. The authors show that the optimal mechanisms in this setup constitute extreme points of the convex polytope $\mathcal M(\varepsilon,P_X)$. This in particular implies that if $P^*_{Y|X}$ is an extremal mechanism satisfying $\varepsilon$-PML with respect to $P_X$, then
\begin{equation}
    \ell_{P^*_{Y|X}\times P_X}(X\to y) = \varepsilon \quad \forall y \in \mathcal Y.
\end{equation}
Closed form extremal mechanisms for specific configurations of $\varepsilon$ and $P_X$ are then presented in \cite[Theorem 2-4]{grosse2023extremal}. Further, \cite[Theorem 6]{grosse2023extremal} presents a linear program with which optimal mechanisms can be computed for arbitrary $\varepsilon$ and $P_X$. 

Another crucial fact observed in \cite{grosse2023extremal} is that the value of the privacy parameter determines the maximum number of zero-valued entries in the mechanism matrix. This leads to the definition of \emph{PML privacy regions}, which partition the interval $[0,\varepsilon_{\max})$ into non-overlapping subspaces.
\begin{definition}[PML privacy regions {{\cite{grosse2023extremal}}}]
\label{def:privacyregions}
    Assume that $P_X(x_1) \geq P_X(x_2) \geq \dots \geq P_X(x_N)$. For $k \in [N-1]$, define $\varepsilon_k (P_X) = - \log \sum_{i=1}^{N-k}P_X(x_i)$. An $\varepsilon$-PML guarantee is said to be in the $k^{\text{th}}$ privacy region if $\varepsilon \in [\varepsilon_{k-1}(P_X),\varepsilon_k(P_X))$.
\end{definition}
Using this definition, \cite[Lemma 2]{grosse2023extremal} then states that any mechanism $P_{Y|X} \in \mathcal M(\varepsilon,P_X)$, with $\varepsilon$ in the $k^{\text{th}}$ privacy region, can have at most $k-1$ zero-valued entries in each column. An important special case of this lemma is the first privacy region, in which a mechanism matrix can only have strictly positive entries.

\subsection{Distribution Estimation}
\label{sec:distest}
PML poses a robust framework in terms of adversarial assumptions. However, assuming perfect knowledge about the private data's distribution can often be highly impractical. Definition~\ref{def:PMLgainfunc} measures information leakage only under the assumption of this specific distribution. In the context of modern data processing systems, arguably the most common scenario involves data distributions that are not perfectly known, but can only be estimated from the available samples in a data set. To this end, we utilize the large deviation bounds for distribution estimation as presented by \citet{weissman2003inequalities}, which we restate below.

Consider a dataset $\{d_i\}_{i=1}^m$ consisting of $m$ entries $d_i$ independently sampled from the data source $X$. The goal of the system designer is to give privacy guarantees for the underlying data distribution $P_X$, while only having access to the \emph{empirical estimate} of the distribution $\hat P_X$, which she computes from $\{d_i\}_{i=1}^m$ as 
\begin{equation}
\label{eq:empiricaldistribution}
    \hat P_{X}(x) = \frac{1}{m}\sum_{i=1}^m \mathbbm 1 \{d_i =x\} \quad \forall x \in \mathcal X.
\end{equation}
Define the $\ell_1$-distance between discrete distributions as
\begin{equation}
    ||P_X - \hat{P}_{X}||_1 = \sum_{i=1}^N |P_X(x_i) - \hat P_X(x_i)|.
\end{equation}
Further, define the function $\varphi: \mathbb [0,0.5) \to (2,\infty)$ as 
\begin{equation}
    \varphi(p) \coloneqq \frac{1}{1-2p}\log\frac{1-p}{p}.
\end{equation}
Note that $\varphi$ is decreasing on $(0,0.5)$ and by continuity we define $\varphi(0.5) \coloneqq 2$.
Further, define $\kappa_{P_X}: \mathcal P_{\mathcal X} \to (0,0.5]$ as
\begin{equation}
    \kappa_{P_X} = \max_{A \subseteq \mathcal X}\min \{P_X(A), 1-P_X(A)\}.
\end{equation}
The following result, presented in \cite{weissman2003inequalities}, states that the probability of the true distribution not being close in $\ell_1$-distance to the estimate decays exponentially with the number of samples.
\begin{theorem}[see {{\cite[Thm. 1]{weissman2003inequalities}}}]
\label{thm:weissman}
    Let $X_i \sim P_X$ be i.i.d. random variables defined on $\mathcal X$ with $|\mathcal X| = N$. Let $\hat P_X$ be estimated from $m$ realizations $X_1,\dots,X_m$ by \eqref{eq:empiricaldistribution}. Then we have
\begin{equation}
    \mathbb P\{||P_X - \hat P_X||_1 \geq \beta\} \leq (2^N-2)\exp\biggl(-m\frac{\varphi(\kappa_{P_X})\beta^2}{4}\biggr).
\end{equation}
\end{theorem}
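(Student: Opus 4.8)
The plan is a union bound over subsets of $\mathcal X$ followed by a sharp Chernoff estimate for each resulting Bernoulli deviation. First I would recast the $\ell_1$ event set-theoretically: for any two probability mass functions $P,Q$ on $\mathcal X$ we have $\|P-Q\|_1 = 2\max_{A\subseteq\mathcal X}\big(P(A)-Q(A)\big)$, the maximum being attained at $A=\{x:P(x)>Q(x)\}$. Hence
\[
  \{\|P_X-\hat P_X\|_1\ge\beta\}=\bigcup_{A\subseteq\mathcal X}\Big\{P_X(A)-\hat P_X(A)\ge\tfrac{\beta}{2}\Big\},
\]
and because the trivial sets $A=\emptyset,\mathcal X$ give $P_X(A)-\hat P_X(A)=0$, only the $2^N-2$ proper nonempty subsets contribute when $\beta>0$. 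A union bound then yields $\mathbb P\{\|P_X-\hat P_X\|_1\ge\beta\}\le\sum_{\emptyset\neq A\subsetneq\mathcal X}\mathbb P\{P_X(A)-\hat P_X(A)\ge\beta/2\}$.

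Next I would bound each term. Fixing a proper nonempty $A$ and writing $p_A\coloneqq P_X(A)$, the estimate $\hat P_X(A)=\frac1m\sum_{i=1}^m\mathbbm 1\{X_i\in A\}$ is a normalized $\mathrm{Binomial}(m,p_A)$, so $\mathbb P\{P_X(A)-\hat P_X(A)\ge\beta/2\}$ is a one-sided tail of an i.i.d.\ Bernoulli average. The Chernoff (Cramér) bound gives $\mathbb P\{P_X(A)-\hat P_X(A)\ge\beta/2\}\le\exp\!\big(-m\,d(p_A-\beta/2\,\|\,p_A)\big)$, where $d(a\,\|\,b)=a\log\tfrac ab+(1-a)\log\tfrac{1-a}{1-b}$ is the binary relative entropy; when $p_A>1/2$ one first replaces $A$ by $A^c$ (same deviation, opposite tail, reference probability $1-p_A<1/2$). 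The quantitative crux is the analytic inequality
\[
  d(q\pm\gamma\,\|\,q)\ \ge\ \varphi(q)\,\gamma^2,\qquad q\in(0,\tfrac12],
\]
i.e.\ that $\varphi(q)$ lower-bounds the normalized curvature of $q\mapsto d(\cdot\,\|\,q)$ uniformly in the deviation $\gamma$ (one checks it is tight to leading order at $q=1/2$, where both sides are $2\gamma^2$). Taking $\gamma=\beta/2$ gives $\mathbb P\{P_X(A)-\hat P_X(A)\ge\beta/2\}\le\exp\!\big(-m\,\varphi(\min\{p_A,1-p_A\})\,\beta^2/4\big)$.

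Finally I would uniformize over $A$: by definition $\kappa_{P_X}=\max_{A\subseteq\mathcal X}\min\{P_X(A),1-P_X(A)\}\ge\min\{p_A,1-p_A\}$ for every $A$, and since $\varphi$ is non-increasing on $(0,\tfrac12]$, each summand is at most $\exp(-m\varphi(\kappa_{P_X})\beta^2/4)$; summing the at most $2^N-2$ identical bounds gives the claim. The one step I expect to require real work is the analytic lemma $d(q\pm\gamma\,\|\,q)\ge\varphi(q)\gamma^2$: obtaining a variance-adaptive exponent (with $\varphi(q)\to\infty$ as $q\to0$, rather than the flat constant of Hoeffding's inequality) and making it hold for \emph{all} admissible $\gamma$ rather than only in the small-deviation limit is where the specific form of $\varphi$ must be exploited; the rest is the union bound and the textbook Chernoff argument for sums of i.i.d.\ Bernoulli variables.
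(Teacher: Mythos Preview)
The paper does not prove this statement at all: Theorem~\ref{thm:weissman} is quoted verbatim from \cite{weissman2003inequalities} and used as a black box, so there is no in-paper proof to compare against. Your outline is a faithful reconstruction of the original argument in \cite{weissman2003inequalities}: the Scheff\'e identity $\|P-Q\|_1=2\max_A(P(A)-Q(A))$ to reduce to $2^N-2$ Bernoulli tails, a union bound, the Chernoff/Cram\'er exponent $d(\cdot\,\|\,p_A)$ for each tail, and then the refined Pinsker-type lemma $d(q\pm\gamma\,\|\,q)\ge\varphi(q)\gamma^2$ on $(0,\tfrac12]$ together with the monotonicity of $\varphi$ to replace $\min\{p_A,1-p_A\}$ by $\kappa_{P_X}$. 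You are right that the only substantive step is the analytic inequality for $\varphi$; everything else is routine. One minor wording point: the set equality you write is really a containment (the event on the left is contained in the union on the right, which is all the union bound needs), though as you note the reverse containment also holds via the maximizing $A$.
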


\section{Approximate Pointwise Maximal Leakage}
\label{sec:approxPML}
The bound in Theorem~\ref{thm:weissman} can be used to obtain $(\varepsilon,\delta)$-PML guarantees when only an empirical estimate of the true data distribution is available to the system designer. The process is as follows. Given the distribution estimate $\hat P_X$, design a mechanism that satisfies $\varepsilon$-PML with respect to \emph{any} data distribution in the $\ell_1$-ball $\mathcal B_\beta(\hat P_X) = \{P_X: ||\hat P_X-P_X||_1\leq \beta\}$. That is, design a mechanism such that 
\begin{equation}
\label{eq:PMLinsidel1ball}
    \varepsilon_{\min}(P_{Y|X},P_X) \leq \varepsilon \quad \forall P_X \in \mathcal B_\beta(\hat P_X).
\end{equation}
Given that the true data distribution is inside of $\mathcal B_\beta(\hat P_X)$, the mechanism will therefore satisfy $\varepsilon$-PML. To obtain a privacy guarantee that is independent of the true data distribution, what remains is to bound the probability that the estimated distribution is \say{bad} in the sense that the true distribution does \emph{not} lie inside of $\mathcal B_\beta(\hat P_X)$. Define this probability as 
\begin{equation}
    \delta = \mathbb P\big[||\hat P_X-P_X||_1 > \beta\big].
\end{equation}
Then a mechanism designed such that \eqref{eq:PMLinsidel1ball} holds will \emph{not} satisfy $\varepsilon$-PML with probability $\delta$. Hence, under any circumstances (\say{bad} estimate or not) the mechanism satisfies $(\varepsilon,\delta)$-PML.

More practically, for some desired maximum acceptable probability of failure $\delta$, one can obtain the required $\ell_1$-radius $\beta$ of the uncertainty set by noticing that the bound in Theorem~\ref{thm:weissman} implies the lower bound
\begin{equation}
    \beta \geq 2\sqrt{\frac{\log(2^N-2)-\log\delta}{m \varphi(\kappa_{P_X})}}.
\end{equation}
Note that this bound still requires knowledge of the underlying distribution due to its dependence on $\varphi(\kappa_{P_X})$. However, a distribution-independent bound can be obtained by observing that $\varphi(\kappa_{P_X}) \geq 2$, which yields the distribution-independent lower bound
\begin{equation}
\label{eq:distindepbeta}
    \beta \geq \sqrt{\frac{2}{m}(\log(2^N-2)-\log \delta)}.
\end{equation}
Bounding $\varphi(\kappa_{P_X})$ in this way tends to be relatively tight for larger alphabets. The bound also holds with equality if there exists a set $A \subset \mathcal X$ such that $P_X(A) = \nicefrac{1}{2}$. This class of probability distributions notably includes the uniform distribution for even alphabet sizes, that is, if $N\, \text{mod}\,2 = 0$.

\section{PML for Sets of Distributions}
\label{sec:PMLforsets}
In order to provide privacy guarantees with the techniques described above, we first study PML as a function of the data distribution. This will enable us to provide bounds on the privacy leakage for mechanisms given a set of possible distributions. We begin by characterizing the set $\mathcal M(\varepsilon, \mathcal P)$ of mechanisms satisfying $\varepsilon$-PML for a set of distributions $\mathcal P$. Then, we present a local Lipschitz bound on the leakage increase for general discrete mechanisms. We further present a more specialized bound that aims to address the above scenario, in which an estimate of the data distribution is available, and privacy mechanisms are designed for this estimate. 
\subsection{Behavior of PML Under Uncertain Data Distributions}
\label{sec:conv}
We begin by characterizing the behavior of $\varepsilon$-PML guarantees with respect to the data-generating distribution $P_X$.
\begin{proposition}
\label{prop:epsPMLisconvex}
    Both $\ell_{P_{Y|X}\times P_X}(X \to y)$ and $\epsilon_{\min}(P_{Y|X},P_X)$ are convex in $P_X \in \text{int}(\mathcal P_{\mathcal X})$ for any fixed mechanism $P_{Y|X}$.
\end{proposition}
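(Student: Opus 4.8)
The plan is to reduce both claims to the elementary fact that $-\log$ composed with a strictly positive affine function is convex. First, for a fixed mechanism $P_{Y|X}$ and a fixed outcome $y$, I would invoke the closed-form expression for PML from the Theorem above to write
\[
    \ell_{P_{Y|X}\times P_X}(X\to y) = \log\,\max_{x\in\mathcal X}\frac{P_{Y|X=x}(y)}{P_Y(y)},
    \qquad P_Y(y) = \sum_{x'\in\mathcal X}P_{Y|X=x'}(y)P_X(x').
\]
Here the numerators $P_{Y|X=x}(y)$ are constants (the mechanism is fixed), and the denominator is an affine function of $P_X$; for $P_X\in\text{int}(\mathcal P_{\mathcal X})$ it is strictly positive precisely for those $y$ such that $P_{Y|X=x}(y)>0$ for some $x\in\mathcal X$, a fixed finite index set which I will call $\mathcal Y'$. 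Since $P_Y(y)$ does not depend on $x$, the maximum factors out, giving, for $y\in\mathcal Y'$,
\[
    \ell_{P_{Y|X}\times P_X}(X\to y) = \log\Big(\max_{x\in\mathcal X}P_{Y|X=x}(y)\Big) - \log P_Y(y).
\]
The first term is constant in $P_X$ and the second is $-\log$ of a strictly positive affine function of $P_X$, hence convex on $\text{int}(\mathcal P_{\mathcal X})$. This establishes the first claim.

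For $\varepsilon_{\min}$, I would first note that on discrete alphabets
\[
    \varepsilon_{\min}(P_{Y|X},P_X) = \max_{y:\,P_Y(y)>0}\ \ell_{P_{Y|X}\times P_X}(X\to y),
\]
because the almost-sure bound $\mathbb P_{Y\sim P_Y}\{\ell(X\to Y)\le\varepsilon\}=1$ holds if and only if $\ell(X\to y)\le\varepsilon$ for every $y$ in the support of $P_Y$. The crucial observation is that, as $P_X$ ranges over $\text{int}(\mathcal P_{\mathcal X})$, the support of $P_Y$ equals the \emph{fixed} set $\mathcal Y'$ defined above, independently of the particular $P_X$. Hence on $\text{int}(\mathcal P_{\mathcal X})$ the map $P_X\mapsto\varepsilon_{\min}(P_{Y|X},P_X)$ is the pointwise maximum over the fixed finite set $\mathcal Y'$ of the convex functions $P_X\mapsto\ell_{P_{Y|X}\times P_X}(X\to y)$, and a finite pointwise maximum of convex functions is convex.

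The step I would be most careful about is the handling of the support of $P_Y$: one must check that restricting $P_X$ to $\text{int}(\mathcal P_{\mathcal X})$ makes the index set $\mathcal Y'$ constant, so that $\varepsilon_{\min}$ is a maximum over a \emph{fixed} index set rather than a moving one (a maximum over a varying index set need not be convex), and, relatedly, that $P_Y(y)>0$ exactly on $\mathcal Y'$ — which is what legitimizes factoring out the maximum above and guarantees that $\ell(X\to y)$ is finite there. Everything else is a routine composition-with-an-affine-map argument; if one wanted to allow a countably infinite output alphabet, the same proof goes through upon replacing "finite maximum" by "supremum", since a pointwise supremum of convex functions remains convex.
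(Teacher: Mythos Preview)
Your proof is correct and follows essentially the same route as the paper's: the paper writes $\ell(X\to y)=\max_{x}i(x;y)$, cites that the information density $i(x;y)$ is convex in $P_X$ for fixed $P_{Y|X}$, and then takes pointwise maxima, which is exactly your argument once one unpacks $i(x;y)=\log P_{Y|X=x}(y)-\log P_Y(y)$ as a constant plus $-\log$ of a positive affine map. Your treatment is in fact more careful than the paper's on the one subtle point, namely that the index set $\supp(P_Y)$ is constant on $\text{int}(\mathcal P_{\mathcal X})$, which the paper's proof leaves implicit.
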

\begin{proof}
    We have $\ell_{P_{Y|X}\times P_X}(X \to y) = \max_{x\in\mathcal X}i(x;y)$ and $\epsilon_{\min}(P_{Y|X},P_X) = \max_{y\in \supp(P_Y)}\ell_{P_X \times P_{Y|X}}(X \to y)$, where $i(x;y)$ denotes the \textit{information density} \cite{Polyanskiy_Wu_2025}. Since $i(x;y)$ is convex in $P_X$ for fixed $P_{Y|X}$ \cite{Polyanskiy_Wu_2025}. The statement follows since pointwise maxima of convex functions are convex \cite{boyd2004convex}.
\end{proof}

Proposition \ref{prop:epsPMLisconvex} implies that the worst-case leakage values are attained at the boundary of a set of data distributions. If that set is a convex polytope, we have the following corollary.

\begin{corollary}
\label{corr:extremalSet}
    Let $\mathcal P$ be a convex set of data distributions with finitely many extreme points $\mathcal{P}^{*} \coloneqq \{\pi^{(1)},\dots,\pi^{(k)}\}$. Then $\mathcal M(\epsilon,\mathcal{P}) = \mathcal M(\epsilon,\mathcal{P}^{*})$.
\end{corollary}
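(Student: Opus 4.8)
The plan is to establish the two inclusions separately, leaning on the definition \eqref{eq:MmathcalP} of $\mathcal M(\epsilon,\mathcal P)$ as an intersection and on the convexity of $\epsilon_{\min}(P_{Y|X},\cdot)$ from Proposition~\ref{prop:epsPMLisconvex}.

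The inclusion $\mathcal M(\epsilon,\mathcal P) \subseteq \mathcal M(\epsilon,\mathcal{P}^{*})$ is immediate and requires no work: since $\mathcal{P}^{*}\subseteq\mathcal P$, the intersection in \eqref{eq:MmathcalP} defining $\mathcal M(\epsilon,\mathcal P)$ ranges over a superset of $\mathcal{P}^{*}$, so any mechanism lying in it in particular lies in $\bigcap_{P_X\in\mathcal{P}^{*}}\mathcal M(\epsilon,P_X)=\mathcal M(\epsilon,\mathcal{P}^{*})$.

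For the reverse inclusion I would first record that, because $\mathcal P$ is a compact convex subset of $\mathbb R^{N}$ (boundedness is automatic from $\mathcal P\subseteq\mathcal P_{\mathcal X}$, and closedness is part of the standing assumption), Minkowski's theorem — the finite-dimensional Krein–Milman theorem — gives $\mathcal P=\mathrm{conv}(\mathcal{P}^{*})$. Now take any $P_{Y|X}\in\mathcal M(\epsilon,\mathcal{P}^{*})$, so that $\epsilon_{\min}(P_{Y|X},\pi^{(i)})\le\epsilon$ for every $i\in[k]$, and let $P_X\in\mathcal P$ be arbitrary, written as a convex combination $P_X=\sum_{i=1}^{k}\lambda_i\pi^{(i)}$ with $\lambda_i\ge 0$ and $\sum_{i=1}^{k}\lambda_i=1$. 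Since $\mathcal P\subseteq\interior(\mathcal P_{\mathcal X})$, Proposition~\ref{prop:epsPMLisconvex} applies along the convex hull, and Jensen's inequality for the convex map $P_X\mapsto\epsilon_{\min}(P_{Y|X},P_X)$ yields
\[ \epsilon_{\min}(P_{Y|X},P_X)\;\le\;\sum_{i=1}^{k}\lambda_i\,\epsilon_{\min}(P_{Y|X},\pi^{(i)})\;\le\;\epsilon, \]
hence $P_{Y|X}\in\mathcal M(\epsilon,P_X)$. As $P_X\in\mathcal P$ was arbitrary, $P_{Y|X}\in\bigcap_{P_X\in\mathcal P}\mathcal M(\epsilon,P_X)=\mathcal M(\epsilon,\mathcal P)$, which gives $\mathcal M(\epsilon,\mathcal{P}^{*})\subseteq\mathcal M(\epsilon,\mathcal P)$ and closes the argument.

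The proof is essentially routine once Proposition~\ref{prop:epsPMLisconvex} is available; the only points that warrant care are the identification $\mathcal P=\mathrm{conv}(\mathcal{P}^{*})$ (which needs $\mathcal P$ compact, so that finitely many extreme points really do recover the whole set) and the fact that the \emph{entire} segment structure of $\mathcal P$, not merely its vertices, lies in $\interior(\mathcal P_{\mathcal X})$, since the convexity of $\epsilon_{\min}$ is only asserted on the interior of the simplex. Neither is a genuine obstacle under the assumptions already in force, but both should be stated explicitly rather than glossed over.
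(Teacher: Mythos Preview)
Your proof is correct and follows exactly the approach the paper intends: the corollary is stated immediately after Proposition~\ref{prop:epsPMLisconvex} as a direct consequence of the convexity of $\epsilon_{\min}(P_{Y|X},\cdot)$, and your argument via Jensen's inequality on convex combinations of the extreme points is precisely the spelled-out version of that implication. The paper does not provide a separate proof, so your write-up simply makes explicit what the paper leaves implicit.
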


This result has an important consequence for mechanism design with uncertain distributions: When searching for a mechanism that maximizes some convex utility function (in particular any sub-convex utility function), it is enough to consider mechanisms satisfying $\varepsilon$-PML with respect to the extreme points of the set $\mathcal P$, instead of the entire (often infinite) set.  This result can be used to find closed-form solutions for optimal mechanisms, as presented in Section~\ref{sec:optimalbinary}.

Next, we define the notion of \emph{$\mathcal P$-local leakage capacity}, which quantifies the worst-case leakage (in terms of $\epsilon$-PML) of the mechanism $P_{Y|X}$ for any data distribution in a set $\mathcal P$.

\begin{definition}[$\mathcal P$-local leakage capacity]
\label{def:localcap}
    Given a set of data distributions $\mathcal P \subseteq \text{int}(\mathcal P_{\mathcal X})$, we define a mechanism's \emph{$\mathcal P$-local leakage capacity} $C(P_{Y|X},\mathcal P)$ as
    \begin{align}
        C(P_{Y|X},\mathcal P) &\coloneqq \sup_{Q_X \in \mathcal P} \varepsilon_{\min}(P_{Y|X},Q_X) \\&= \max_y \sup_{Q_X \in \mathcal P} \log \frac{\max_{x} P_{Y|X=x}(y)}{\sum_{x'} Q_X(x')P_{Y|X=x'}(y)}.
    \end{align}
\end{definition}

\begin{remark}
    If $\mathcal P = \text{int}(\mathcal P_{\mathcal X})$, that is, if the uncertainty set contains all distributions with full support on $\mathcal X$, the $\mathcal P$-local leakage capacity of a mechanism is equivalent to the leakage capacity as defined by \citet{saeidian2023inferential}, which also coincides with the logarithm of the \emph{lift-capacity} in \cite{10664297}.  
\end{remark}

With this definition of leakage capacity at hand, we can prove a local Lipschitz bound on the $\epsilon$-PML guarantees of a privacy mechanism with respect to different data distributions.

\begin{theorem}
\label{thrm:lipschitz}
    Consider the two metric spaces $(\mathcal P_{\mathcal X},||\cdot||_1)$ and $(\mathbb R_+, |\cdot|)$. Fix some subset $\mathcal P \subseteq \text{int}(\mathcal P_{\mathcal X})$ and a privacy mechanism $P_{Y|X}$. Then the function $\epsilon_{\min}: \interior(\mathcal P_{\mathcal X}) \to \mathbb R_+$ is $\exp(C(P_{Y|X},\mathcal P))$-locally Lipschitz on the set $\mathcal P$, that is, for any two distributions $P_X,\,Q_X \in \mathcal P$, we have
    \begin{align}
        |\epsilon_{\min}(P_{Y|X},P_X) - &\epsilon_{\min}(P_{Y|X},Q_X)| \\[.5em]&\leq \exp\Big({C(P_{Y|X},\mathcal P)\Big)}||P_X - Q_X||_1.
    \end{align}
\end{theorem}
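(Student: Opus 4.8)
The plan is to prove the Lipschitz bound pointwise in $y$ and then take a maximum over $y$. Fix an outcome $y \in \supp(P_Y)$ and define, for a data distribution $R_X \in \mathcal P$, the function $f_y(R_X) := \ell_{P_{Y|X} \times R_X}(X \to y) = \log \max_x \frac{P_{Y|X=x}(y)}{\sum_{x'} R_X(x') P_{Y|X=x'}(y)}$. Since $\epsilon_{\min}(P_{Y|X}, R_X) = \max_{y} f_y(R_X)$ and $|\max_y a_y - \max_y b_y| \leq \max_y |a_y - b_y|$, it suffices to establish, for every $y$ and every $P_X, Q_X \in \mathcal P$, the bound $|f_y(P_X) - f_y(Q_X)| \leq \exp(C(P_{Y|X},\mathcal P)) \, \|P_X - Q_X\|_1$.

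First I would analyze a single $f_y$. Write $g_y(R_X) := \sum_{x'} R_X(x') P_{Y|X=x'}(y) = P_Y(y)$ for the denominator, an affine (hence linear in the relevant sense) function of $R_X$, and let $a_y := \max_x P_{Y|X=x}(y)$ be constant in $R_X$. Then $f_y(R_X) = \log a_y - \log g_y(R_X)$, so $f_y(P_X) - f_y(Q_X) = \log g_y(Q_X) - \log g_y(P_X)$. By the mean value theorem applied to $\log$ on the interval between $g_y(P_X)$ and $g_y(Q_X)$, there is a point $\xi$ between these two values with $|f_y(P_X) - f_y(Q_X)| = \frac{1}{\xi} |g_y(P_X) - g_y(Q_X)|$. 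For the numerator I would bound $|g_y(P_X) - g_y(Q_X)| = |\sum_{x'} (P_X(x') - Q_X(x')) P_{Y|X=x'}(y)| \leq a_y \sum_{x'} |P_X(x') - Q_X(x')| = a_y \|P_X - Q_X\|_1$, using $P_{Y|X=x'}(y) \leq a_y$. For the factor $1/\xi$: since $\xi$ lies between $g_y(P_X) = \sum_{x'} P_X(x') P_{Y|X=x'}(y)$ and $g_y(Q_X)$, both of which are values of the linear function $g_y$ at points of $\mathcal P$, we have $\xi \geq \min(g_y(P_X), g_y(Q_X)) \geq \inf_{R_X \in \mathcal P} g_y(R_X)$. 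Combining, $|f_y(P_X) - f_y(Q_X)| \leq \frac{a_y}{\inf_{R_X \in \mathcal P} g_y(R_X)} \|P_X - Q_X\|_1 = \exp\!\big(\sup_{R_X \in \mathcal P} f_y(R_X)\big)\|P_X - Q_X\|_1$, where the last equality just unwinds the definitions of $a_y$, $g_y$, and $\log$.

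Finally I would take the maximum over $y \in \supp(P_Y)$: since $\sup_{R_X \in \mathcal P} f_y(R_X) \leq \max_y \sup_{R_X \in \mathcal P} f_y(R_X) = C(P_{Y|X}, \mathcal P)$ by Definition~\ref{def:localcap}, each per-outcome Lipschitz constant is bounded by $\exp(C(P_{Y|X},\mathcal P))$, and hence so is the constant for $\epsilon_{\min} = \max_y f_y$. One technical wrinkle to be careful about: the support of $P_Y$ can depend on which data distribution is used, but since all distributions in $\mathcal P$ have full support on $\mathcal X$ (as $\mathcal P \subseteq \interior(\mathcal P_{\mathcal X})$), the set of $y$ with $P_Y(y) > 0$ is the same for all $R_X \in \mathcal P$, namely $\{y : \exists x,\ P_{Y|X=x}(y) > 0\}$; on this common support $g_y(R_X) > 0$ so all the logarithms and the $1/\xi$ factor are finite. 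The main obstacle is this uniform positivity of the denominator — getting the bound $\xi \geq \inf_{R_X \in \mathcal P} g_y(R_X) > 0$ cleanly, and recognizing that this infimum is exactly what turns into $\exp(-\sup_{R_X} f_y(R_X))$ — but once the problem is reduced to a one-variable mean-value-theorem estimate on $\log$, the rest is routine.
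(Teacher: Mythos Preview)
Your proposal is correct and follows essentially the same route as the paper: reduce to a per-$y$ bound via $|\max_y a_y - \max_y b_y|\le \max_y|a_y-b_y|$, control $|\log g_y(P_X)-\log g_y(Q_X)|$ by $|g_y(P_X)-g_y(Q_X)|/\min\{g_y(P_X),g_y(Q_X)\}$ (the paper states this as the elementary inequality $|\log(a/b)|\le|a-b|/\min\{a,b\}$, which is exactly your mean-value-theorem step), bound the numerator by $a_y\|P_X-Q_X\|_1$ and the denominator by $\inf_{R_X\in\mathcal P}g_y(R_X)$, and recognize the resulting ratio as $\exp(C(P_{Y|X},\mathcal P))$. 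Your extra remark on the common support of $P_Y$ across $\mathcal P\subseteq\interior(\mathcal P_{\mathcal X})$ is a useful clarification that the paper leaves implicit.
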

\begin{proof}
  See Appendix~\ref{app:lipschitzproof}.
\end{proof}
The result in Theorem~\ref{thrm:lipschitz} holds true for any privacy mechanism, and any two distributions in the set $\mathcal P$. It therefore shows that given a finite leakage capacity, the PML of a mechanism changes smoothly over the data-generating distributions $P_X$ in that set. While this result is mainly of theoretical interest (e.g., for proving convergence results \cite{diaz2019robustness} or in privacy auditing \cite{sreekumar2023limit}), in what follows, we will derive a more practically-oriented bound for the setting of mechanism design.

\subsection{Leakage Sensitivity}
 From a design perspective, cases in which a mechanism is designed with respect to a known reference distribution are often of interest. If there is precise knowledge about this design distribution, a sharper bound on the leakage increase can be obtained. To this end, we introduce the \emph{local leakage sensitivity}, which quantifies the sensitivity of a mechanism's leakage to distributional changes. Given that the PML of the mechanism is known for a specific reference distribution (e.g. an empirical estimate), the local leakage sensitivity quantifies how much the leakage might increase when the input distribution deviates from the reference by a specified $\ell_1$-distance.
\begin{definition}[$(P_X,\mathcal P)$-local leakage sensitivity]
\label{def:leakagesensitivity}
    For some distribution $P_X \in \mathcal P \subseteq \text{int}(\mathcal P_{\mathcal X})$, we define the \emph{local leakage sensitivity} $S_{P_X}(P_{Y|X},\mathcal P)$ of a mechanism $P_{Y|X}$ as
    \begin{align}
        S_{P_X}(P_{Y|X},\mathcal P) \coloneqq \sup_{Q_X \in \mathcal P} \Big(\varepsilon_{\min}&(P_{Y|X},Q_X) \\&- \varepsilon_{\min}(P_{Y|X},P_X)\Big).
    \end{align}
\end{definition}

\begin{remark}
\label{rem:relationbetweenCandS}
     The definition of local leakage capacity in Definition~\ref{def:localcap} underlines the operational interpretation for the local leakage sensitivity of a mechanism: Assume that the mechanism $P_{Y|X}$ satisfies $\epsilon$-PML for some distribution $P_X$ in $\mathcal P$. Then the $\mathcal P$-local leakage capacity can be bounded as 
    \begin{align}
        C(P_{Y|X},\mathcal P) &= \max_y \sup_{Q_X \in \mathcal P}\log \frac{\max_x P_{Y|X=x}(y)}{\sum_{x'} Q_X(x') P_{Y|X=x'}(y)} \\[.7em]&\leq \epsilon + \max_y \sup_{Q_X \in \mathcal P}\log \frac{\sum_x P_X(x)P_{Y|X=x}(y)}{\sum_{x'} Q_X(x')P_{Y|X=x'}(y)} \\[.1em]&= \epsilon + S_{P_X}(P_{Y|X},\mathcal P).
    \end{align}
\end{remark}
\begin{comment}
The follwing corollary is a direct result of this observation.
\begin{corollary}
\label{corr:specifiedLipschitz}
    If $P_{Y|X}$ satisfies $\epsilon$-PML with respect to the distribution $P_X \in \mathcal P$, then the leakage with respect to any other distribution $Q_X$ in $\mathcal P$ is bounded by 
\begin{equation}
\label{eq:tightleakageincrease}
    \epsilon_{\min}(Q_X) \leq \epsilon_{\min}(P_X) + S_{P_X}(P_{Y|X},\mathcal P).
\end{equation}
    Further, the statement in Remark~\ref{rem:relationbetweenCandS} together with Theorem~\ref{thrm:lipschitz} shows that $\varepsilon_{\min}$ is Lipschitz-continuous according to 
    \begin{align}
        |\epsilon_{\min}(P_X) &- \epsilon_{\min}(Q_X)| \\&\leq \exp\biggl(\epsilon + S_{P_X}(P_{Y|X},\mathcal P)\biggr)||Q_X-P_X||_1.
    \end{align}
\end{corollary}
\end{comment}

\subsection{$\ell_1$-Uncertainty sets}
\begin{figure*}[!ht]
    \centering
    \begin{subfigure}[b]{.99\columnwidth}
    \centering
        \includegraphics[scale=0.5]{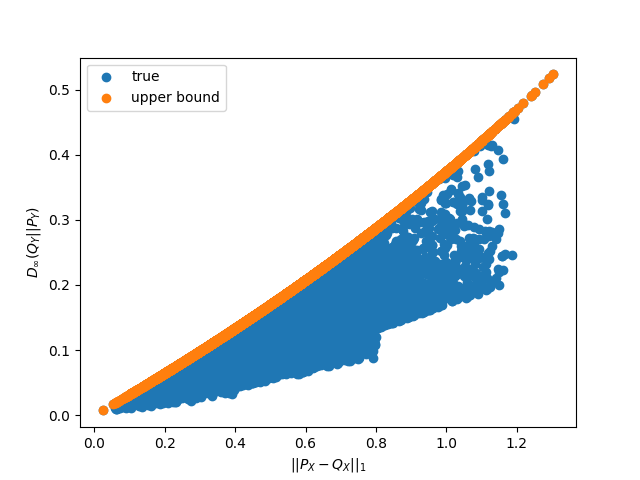}
        \caption{$P_{Y|X}$ in Example~\ref{ex:boundExtrMech}.}
        \label{subfig:TIFS1}
    \end{subfigure}
    \begin{subfigure}[b]{0.99\columnwidth}
    \centering
        \includegraphics[scale=0.5]{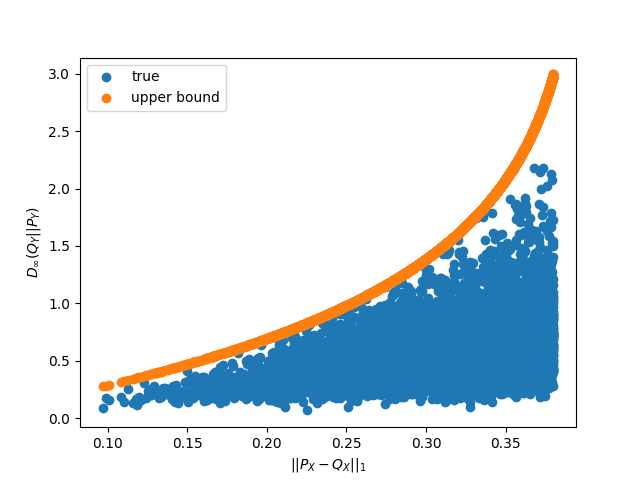}
        \caption{Example~\ref{ex:boundksingular}: $2$-singular mechanism, $N=10$.}
        \label{subfig:2singular}
    \end{subfigure}
    \caption{The upper bound \eqref{eq:Spx_l1bound} given in Theorem~\ref{thm:l1gamma} for Examples~\ref{ex:boundExtrMech} and~\ref{ex:boundksingular}. For the reference distribution $P_X$ for which each mechanism is optimal, $n=10000$ random distributions $\{Q_X^{(i)}\}_{i=1}^n$ are generated. For each $Q_X^{(i)}$, the value of $D_\infty(P_Y||Q_Y^{(i)})=D_\infty(P_{Y|X}\circ P_X||P_{Y|X}\circ Q_X^{(i)})$ (labeled ``true" in the figures) is plotted over the $\ell_1$-distance $\beta_i \coloneqq ||P_X-Q^{(i)}_X||_1$ between the distributions. Note that $S_{P_X}(P_{Y|X},\mathcal B_{\beta_i}(P_X))$ is given by the maximum of $D_\infty(P_Y||P_{Y|X}\circ Q_X)$ for any $Q_X \in \mathcal B_{\beta_i}(P_X)$.}
    \label{fig:upperbound}  
\end{figure*}
In this section, we will derive bounds on the leakage sensitivity for the distribution estimation scenario central to the overall approach presented in this work: Assume an estimate of the data-generating distribution is computed from \eqref{eq:empiricaldistribution}. The uncertainty set $\mathcal P$ then becomes an $\ell_1$-ball with radius $\beta$ around this estimate, that is, we define
    \begin{equation}
    \label{eq:l1uncertaintyset}
        \mathcal P = \mathcal B_\beta(\hat P_X) \coloneqq \{P_X \in \interior(\mathcal P_{\mathcal X}): ||\hat P_X - P_X||_1 \leq \beta\}.
    \end{equation}
The symmetry of this set can be used to obtain a bound on the leakage sensitivity which only depends on the $\epsilon$-PML guarantee of the mechanism with respect to the estimated distribution, as well as the $\ell_1$-radius of the uncertainty set. 

\begin{theorem}
\label{thm:l1gamma}
    Let $P_X \in \interior(\mathcal P_{\mathcal X})$ and define $p_{\min} = \min_x P_X(x)$. For some $\beta < 2p_{\min}$, let $\mathcal B_\beta(P_X)$ be defined by \eqref{eq:l1uncertaintyset}.
    If $P_{Y|X}$ satisfies $\epsilon$-PML with respect to $P_X$, where $\epsilon$ is in the $k^{\text{th}}$ privacy region according to Definition~\ref{def:privacyregions},\footnote{Recall that $\varepsilon$ in \emph{any} privacy region implies $\varepsilon\leq\varepsilon_{\max}$. This enforces a non-negative term in the logarithm in \eqref{eq:Spx_l1bound}. Since all mechanisms (including releasing data unchanged) satisfy $\varepsilon_{\max}$-PML, this is no restriction.} then,
    \begin{equation}
    \label{eq:Spx_l1bound}
          S_{P_X}(P_{Y|X},\mathcal B_\beta(P_X)) \leq \begin{cases}
              -\log \big(1-\frac{\beta}{2}\frac{e^{\epsilon}-1}{p_{\min}}\big), &\text{if }k=1, \\[0.3em]
              -\log \big(1-\frac{\beta e^{\epsilon}}{2}\big), &\text{if }k>1,
          \end{cases}
    \end{equation}
    with equality for $k=1$ if $P_{Y|X}$ is extremal (see Section~\ref{subsec:extremalmech}), and for $k>1$ if for all $y\in \mathcal Y$, there exists $x\in\mathcal X$ such that we have $P_{Y|X=x}(y)=0$.  
\end{theorem}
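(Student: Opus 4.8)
The plan is to fix an arbitrary $Q_X\in\mathcal B_\beta(P_X)$ and an output $y$, bound $\log\frac{\max_x P_{Y|X=x}(y)}{Q_Y(y)}$ in terms of $\ell_{P_{Y|X}\times P_X}(X\to y)$ and $\beta$, then take a maximum over $y$ and a supremum over $Q_X$. Throughout write $f(y)=\max_x P_{Y|X=x}(y)$, $g(y)=\min_x P_{Y|X=x}(y)$ and $r_y=e^{\ell_{P_{Y|X}\times P_X}(X\to y)}=f(y)/P_Y(y)\ge 1$; note that $\beta<2p_{\min}$ forces every coordinate of $Q_X$ to exceed $p_{\min}-\tfrac\beta2>0$, so $\supp(Q_Y)=\supp(P_Y)$. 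The one structural fact I need is the \emph{displacement bound}: for every $Q_X\in\mathcal B_\beta(P_X)$ and every $y\in\supp(P_Y)$,
\[
Q_Y(y)\ \ge\ P_Y(y)-\tfrac{\beta}{2}\big(f(y)-g(y)\big).
\]
This follows by writing $Q_Y(y)-P_Y(y)=\sum_x (Q_X(x)-P_X(x))\big(P_{Y|X=x}(y)-g(y)\big)$ (the $g(y)$ term drops since $\sum_x(Q_X(x)-P_X(x))=0$), using $P_{Y|X=x}(y)-g(y)\in[0,f(y)-g(y)]$, and $\sum_x(P_X(x)-Q_X(x))^+=\tfrac12\|P_X-Q_X\|_1\le\tfrac\beta2$.

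For the case $k>1$, I would simply use $g(y)\ge 0$, so that $Q_Y(y)\ge P_Y(y)-\tfrac\beta2 f(y)=P_Y(y)\big(1-\tfrac\beta2 r_y\big)$, which is strictly positive because $\epsilon\le\epsilon_{\max}=-\log p_{\min}$ and $\beta<2p_{\min}$ yield $\tfrac\beta2 r_y\le\tfrac\beta2 e^\epsilon<1$. Hence $\log\frac{f(y)}{Q_Y(y)}\le\log\frac{r_y}{1-\frac\beta2 r_y}$. Since $r\mapsto\log\frac{r}{1-\frac\beta2 r}$ is increasing on its domain and $r_y\le e^{\varepsilon_{\min}(P_{Y|X},P_X)}\le e^\epsilon$, taking $\max_y$ gives $\varepsilon_{\min}(P_{Y|X},Q_X)\le\varepsilon_{\min}(P_{Y|X},P_X)-\log\big(1-\tfrac\beta2 e^\epsilon\big)$ (using also monotonicity of $t\mapsto-\log(1-\tfrac\beta2 e^t)$ and $\varepsilon_{\min}(P_{Y|X},P_X)\le\epsilon$); subtracting $\varepsilon_{\min}(P_{Y|X},P_X)$ and taking $\sup_{Q_X}$ gives the claimed bound.

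For the case $k=1$ the extra leverage is that a mechanism in the first privacy region not only has strictly positive entries (\cite[Lemma~2]{grosse2023extremal}) but in fact has column minima bounded away from zero by the privacy constraint: I claim $g(y)\ge f(y)\big(1-\tfrac{1-e^{-\ell(X\to y)}}{p_{\min}}\big)$ for every $y$. To see this, normalize so $f(y)=1$; then $e^{-\ell(X\to y)}=P_Y(y)=\sum_x P_X(x)P_{Y|X=x}(y)\le(1-P_X(x_*))+P_X(x_*)g(y)$, where $x_*$ attains the column minimum, and rearranging with $P_X(x_*)\ge p_{\min}$ and $1-e^{-\ell(X\to y)}\ge 0$ yields the claim. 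Therefore $f(y)-g(y)\le f(y)\tfrac{1-e^{-\ell(X\to y)}}{p_{\min}}=P_Y(y)\tfrac{r_y-1}{p_{\min}}$, and the displacement bound gives $Q_Y(y)\ge P_Y(y)\big(1-\tfrac\beta2\tfrac{r_y-1}{p_{\min}}\big)$; the bracket is positive because in the first region $e^\epsilon<(1-p_{\min})^{-1}\le 2$ (so $e^\epsilon-1<1$, using $N\ge2\Rightarrow p_{\min}\le\tfrac12$) and hence $\beta<2p_{\min}<\tfrac{2p_{\min}}{e^\epsilon-1}$. As before $r\mapsto\log\frac{r}{1-\frac\beta2\frac{r-1}{p_{\min}}}$ is increasing and $r_y\le e^\epsilon$, so $\max_y$ followed by $\sup_{Q_X}$ gives $S_{P_X}(P_{Y|X},\mathcal B_\beta(P_X))\le-\log\big(1-\tfrac\beta2\tfrac{e^\epsilon-1}{p_{\min}}\big)$.

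For the equality statements I would exhibit the extremizer: shifting mass $\beta/2$ from the column maximizer of the relevant $y$ onto the least-likely coordinate $x_N$ (feasible since $P_X(x_N)=p_{\min}>\tfrac\beta2$) makes the displacement bound an equality. For $k>1$, the hypothesis that every column has a zero entry makes $g(y)=0$ and the same perturbation attains $Q_Y(y)=P_Y(y)-\tfrac\beta2 f(y)$; for $k=1$, an extremal mechanism moreover has $\ell(X\to y)=\epsilon$ for all $y$ and, in the first region, columns attaining $g(y)=f(y)\big(1-\tfrac{1-e^{-\epsilon}}{p_{\min}}\big)$ with the minimum at $x_N$, so every inequality above becomes an equality. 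I expect the main obstacle to be the $k=1$ column lower bound — recognizing that membership in the \emph{first privacy region}, rather than mere positivity of entries, is exactly what pins $g(y)$ below by the amount that upgrades the crude $k>1$ estimate to the sharper $k=1$ one — together with the bookkeeping of which positivity and monotonicity facts are supplied by $\beta<2p_{\min}$ and by $\epsilon$ lying in a privacy region.
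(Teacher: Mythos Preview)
Your argument is correct and is essentially the paper's proof: the same ``displacement bound'' $Q_Y(y)\ge P_Y(y)-\tfrac{\beta}{2}(f(y)-g(y))$ obtained by extremizing over $\omega=Q_X-P_X$, followed by $f(y)\le e^{\epsilon}P_Y(y)$ and, for $k=1$, a lower bound on the column minimum. Two small differences worth noting. First, the paper phrases the whole thing as a bound on $D_\infty(P_Y\|Q_Y)$ and invokes \cite[Prop.~1]{10619510} for the $k=1$ column lower bound $g(y)\ge\frac{1-e^{\epsilon}(1-p_{\min})}{p_{\min}}P_Y(y)$, whereas you derive this bound directly from $P_Y(y)\le(1-P_X(x_*))f(y)+P_X(x_*)g(y)$; your pointwise version with $\ell(X\to y)$ in place of $\epsilon$ is in fact slightly sharper before taking the max over $y$, but of course collapses to the same final inequality. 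Second, your equality sketch has a minor slip: to make the displacement bound an equality you must move mass $\beta/2$ from the column \emph{maximizer} to the column \emph{minimizer} of $P_{Y|X=\cdot}(y)$, not to the least-likely coordinate $x_N$ of $P_X$; feasibility is guaranteed because $P_X$ at the column maximizer exceeds $p_{\min}>\beta/2$. For $k>1$ the zero entry in a given column need not sit at $x_N$, so the target of the shift is column-dependent. With that correction, your tightness argument matches the paper's.
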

\begin{proof}
   See Appendix~\ref{app:l1gammaproof}.
\end{proof}
The above result can be used to obtain bounds on $\epsilon_{\min}(P_{Y|X},Q_X)$ in the sets $\mathcal B_\beta(P_X)$. That is, Theorem~\ref{thm:l1gamma} enables us to upper bound the increase in leakage for any $\epsilon$-PML mechanism due to a change in data distribution. 
\begin{corollary}
\label{corr:leakagebound}
    Let $P_{Y|X}$ be a mechanism satisfying $\epsilon$-PML with respect to the distribution $P_X$. For any $Q_X \in \mathcal B_\beta(P_X)$ with $\beta < 2e^{-\varepsilon}$, the PML with respect to $Q_X$ is bounded by\footnote{Note that this bound can be improved in the first privacy region by using the specific bound for this case in Theorem~\ref{thm:l1gamma}. However, since the results for the other privacy regions also hold for privacy region one, we only present this more general case here for sake of clarity.}
    \begin{equation}
        \varepsilon_{\min}(P_{Y|X},Q_X) \leq \epsilon + \log \bigg(1-\frac{e^{\epsilon} ||P_X-Q_X||_1}{2}\bigg)^{-1}.
    \end{equation}
\end{corollary}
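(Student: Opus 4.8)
The plan is to obtain Corollary~\ref{corr:leakagebound} as an essentially immediate consequence of the definition of the local leakage sensitivity (Definition~\ref{def:leakagesensitivity}) together with Theorem~\ref{thm:l1gamma}, modulo one small rewriting step. Concretely, I would fix an arbitrary $Q_X \in \mathcal B_\beta(P_X)$, put $\gamma \coloneqq ||P_X - Q_X||_1 \le \beta < 2e^{-\varepsilon}$, and observe that $Q_X$ itself lies in the (possibly smaller) ball $\mathcal B_\gamma(P_X)$, which by \eqref{eq:l1uncertaintyset} is contained in $\interior(\mathcal P_{\mathcal X})$ and contains $P_X$. The three ingredients to assemble are then: (i) the hypothesis that $P_{Y|X}$ satisfies $\varepsilon$-PML with respect to $P_X$, which by the definition of $\varepsilon_{\min}$ gives $\varepsilon_{\min}(P_{Y|X},P_X)\le\varepsilon$; (ii) Definition~\ref{def:leakagesensitivity} applied with the uncertainty set $\mathcal P = \mathcal B_\gamma(P_X)$, which yields $\varepsilon_{\min}(P_{Y|X},Q_X)\le\varepsilon_{\min}(P_{Y|X},P_X)+S_{P_X}(P_{Y|X},\mathcal B_\gamma(P_X))$; and (iii) Theorem~\ref{thm:l1gamma} to control the sensitivity term.

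For step (iii) I would invoke the $k>1$ branch of \eqref{eq:Spx_l1bound}, namely $S_{P_X}(P_{Y|X},\mathcal B_\gamma(P_X)) \le -\log\big(1-\tfrac{\gamma e^{\varepsilon}}{2}\big)$; as noted in the footnote to the corollary, this branch also holds in the first privacy region, so no case distinction on $k$ is needed. Chaining (i)--(iii), substituting $\gamma = ||P_X-Q_X||_1$, and using $-\log(1-t)=\log(1-t)^{-1}$ (legitimate since $t = \tfrac{e^{\varepsilon}||P_X-Q_X||_1}{2} < 1$ by the standing hypothesis $\beta<2e^{-\varepsilon}$) gives
\begin{equation}
  \varepsilon_{\min}(P_{Y|X},Q_X) \;\le\; \varepsilon + \log\Big(1-\tfrac{e^{\varepsilon}||P_X-Q_X||_1}{2}\Big)^{-1},
\end{equation}
and since $Q_X$ was arbitrary this is exactly the claimed bound.

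The one point that I expect to require attention — and the only genuine obstacle — is the mismatch of hypotheses: Theorem~\ref{thm:l1gamma} is stated under the radius restriction $\beta < 2p_{\min}$, whereas the corollary only assumes $\beta < 2e^{-\varepsilon}$, which is strictly weaker because $\varepsilon \le \varepsilon_{\max} = -\log p_{\min}$ forces $e^{-\varepsilon}\ge p_{\min}$. To cover the extra range $\beta\in[2p_{\min},2e^{-\varepsilon})$ I would not route through the statement of Theorem~\ref{thm:l1gamma} but instead reproduce the short estimate underlying its $k>1$ case directly: for $Q_X\in\mathcal B_\gamma(P_X)$ and any $y$, write $\sum_{x'} Q_X(x')P_{Y|X=x'}(y)$ as $P_Y(y)=\sum_{x'}P_X(x')P_{Y|X=x'}(y)$ plus a signed deviation, bound the (negative part of the) deviation by $\tfrac{\gamma}{2}\max_x P_{Y|X=x}(y)$ using that the $\ell_1$-deficit equals $\gamma/2$, and then apply the $\varepsilon$-PML property $\max_x P_{Y|X=x}(y)\le e^{\varepsilon}P_Y(y)$ to get $\sum_{x'} Q_X(x')P_{Y|X=x'}(y) \ge P_Y(y)\big(1-\tfrac{e^{\varepsilon}\gamma}{2}\big)$; dividing $\max_x P_{Y|X=x}(y)$ by this lower bound and taking $\log$ and $\max_y$ reproduces $\varepsilon_{\min}(P_{Y|X},Q_X)\le \varepsilon-\log(1-\tfrac{e^{\varepsilon}\gamma}{2})$ using nothing beyond $e^{\varepsilon}\gamma<2$. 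This makes the corollary valid on the full range $\beta<2e^{-\varepsilon}$, and with that observation in place the remainder of the argument is purely mechanical.
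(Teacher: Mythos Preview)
Your proposal is correct and follows essentially the same route the paper intends: the corollary is meant to be read off directly from Definition~\ref{def:leakagesensitivity} combined with the $k>1$ branch of Theorem~\ref{thm:l1gamma}, exactly as you do in steps (i)--(iii). Your additional observation about the hypothesis mismatch (the corollary assumes only $\beta<2e^{-\varepsilon}$, whereas Theorem~\ref{thm:l1gamma} is stated under $\beta<2p_{\min}$) is a genuine point the paper does not address explicitly; your resolution---reproducing the core estimate from Appendix~\ref{app:l1gammaproof} directly, which indeed uses nothing beyond $\tfrac{e^{\varepsilon}\gamma}{2}<1$---is the right fix and makes the argument complete on the full range claimed.
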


The following examples illustrate the bound in Theorem~\ref{thm:l1gamma}. Note that simple manipulations may be used to show that $S_{P_X}(P_{Y|X},\mathcal P) = \max_{Q_X \in \mathcal P}D_\infty(P_{Y|X}\circ P_X||P_{Y|X}\circ Q_X)$. To evaluate the tightness of the upper bound presented in Theorem~\ref{thm:l1gamma}, we therefore compare it to the true values of $D_\infty$ for randomly generated distributions $Q_X$ as a function of the $\ell_1$-distance between the distribution pairs $P_X$, $Q_X$ in Figure~\ref{fig:upperbound}.  
\begin{example}[First privacy region extremal mechanism]
\label{ex:boundExtrMech}
    Let $P_{Y|X}$ be the mechanism in \cite[Example 1]{grosse2023extremal}, that is, let
    \begin{equation}
        P_{Y|X} = \begin{bmatrix}
            0.325 & 0.225 & 0.225 & 0.225 \\
            0.45 & 0.1 & 0.225 & 0.225 \\
            0.45 & 0.225 & 0.1 & 0.225 \\
            0.45 & 0.225 & 0.225 & 0.1
        \end{bmatrix}.
    \end{equation}
    $P_{Y|X}$ is an optimal mechanism for sub-convex functions and $\varepsilon= \log\nicefrac{9}{8}$, $P_X = (\nicefrac{2}{5},\nicefrac{1}{5},\nicefrac{1}{5},\nicefrac{1}{5})$, which constitutes a mechanism in the first privacy region. The bound in Theorem~\ref{thm:l1gamma} for this mechanism is shown in Figure~\ref{subfig:TIFS1}.
\end{example}

\begin{example}[$k$-singular mechanisms]
\label{ex:boundksingular}
    Inspired by \cite{altuug2014singulardistributions}, we say a mechanism is $k$-singular if it is doubly stochastic and each entry in the mechanism matrix is either 0 or $1/k$, that is,
     \begin{equation}
        P_{Y|X=x}(y) = \begin{cases}
            \frac{1}{k}, \text{ if } y \in \supp(P_{Y|X=x}) \\
            0, \text{ otherwise.}
        \end{cases}
    \end{equation}$k$-singular mechanisms are optimal extremal mechanisms for the uniform distribution satisfying $\varepsilon$-PML with $\epsilon = \log(N/k)$ \cite[Theorem 4]{grosse2023extremal}. The resulting bound using Theorem~\ref{thm:l1gamma} for one such mechanism with $N=10$ is shown in Figure~\ref{subfig:2singular}.
\end{example}

\subsection{$(\epsilon,\delta)$-PML Guarantees for Distribution Estimates}
\label{sec:mainmethod}
\begin{figure*}[!t]
\centering
\begin{subfigure}{.49\textwidth}
    \centering
    \includegraphics[scale=0.425]{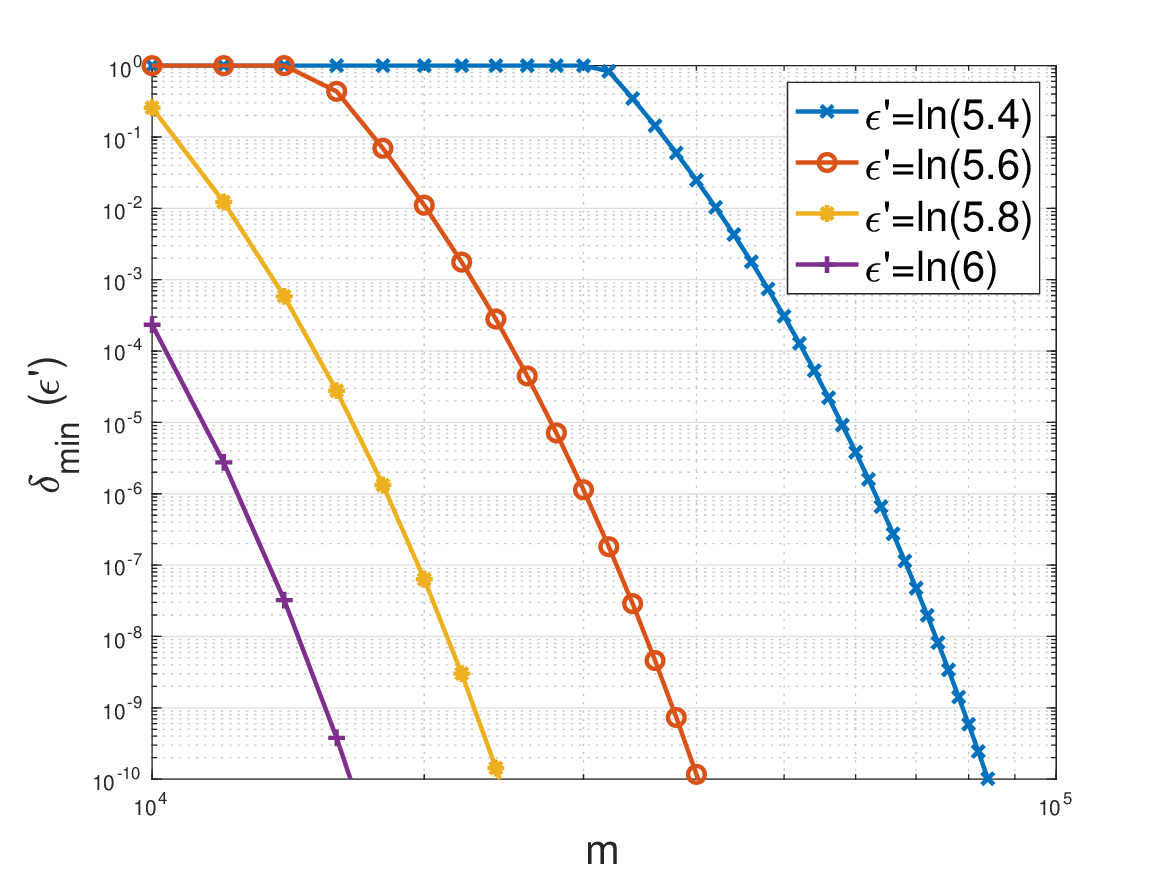}
    \caption{$N=20$, $\epsilon = \log 5$.}
    \label{fig:deltaminplot:subfig:deltaeps}
\end{subfigure}
    \begin{subfigure}{.49\textwidth}
    \centering
    \includegraphics[scale=0.425]{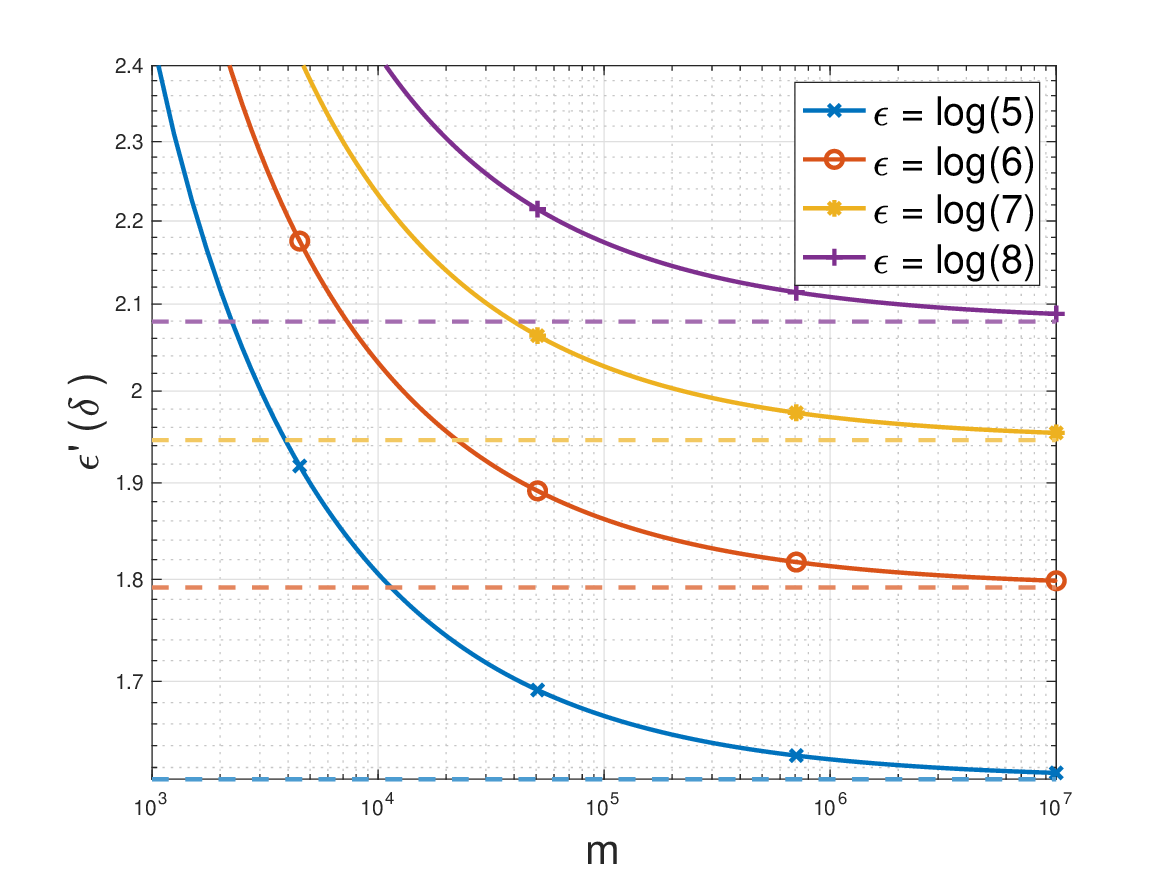}
    \caption{$N=20$, $\delta = 10^{-5}$.}
    \label{fig:deltaminplot:subfig:epsdelta}
\end{subfigure}
    \caption{(\subref{fig:deltaminplot:subfig:deltaeps}): Tradeoff between leakage increase and probability of failure according to Corollary~\ref{corr:deltamin}. Whenever the bound yields values above one, it is vacuous and we assume $\delta_{\min} = 1$. (\subref{fig:deltaminplot:subfig:epsdelta}): Leakage increase according to \eqref{eq:epsmax} for a fixed value of $\delta = 10^{-5}$ and different design parameters $\epsilon$ as a function of the amount of samples $m$ used to estimate the data-generating distribution. Dashed lines mark the the privacy parameter under the estimated distribution.}
    \label{fig:deltaminplot}
\end{figure*}
Using the presented bounds, we can proceed to find the optimal $(\epsilon,\delta)$-PML tradeoff for a fixed mechanism design: Consider a mechanism $P_{Y|X}$ has been designed to satisfy $\epsilon$-PML with respect to a distribution $\hat P_X$ estimated from $m$ data samples according to \eqref{eq:empiricaldistribution}. With the distribution estimation bound in Theorem~\ref{thm:weissman} at hand, we can find the minimum $\delta$ that can be achieved for some fixed leakage target $\varepsilon' \geq \varepsilon$, that is,
\begin{equation}
    \delta_{\min}(\epsilon') \coloneqq \mathbb P\biggl[\bigl{\{} y\in\mathcal Y: \ell_{P_X \times P_{Y|X}}(X \to y) > \epsilon'\bigr{\}}\biggr],
\end{equation}
where $P_X$ is the true distribution that generated the $m$ observed data samples (from which $\hat P_X$ was estimated). Conversely, me may also fix some $\delta \in [0,1]$ and find $\varepsilon'(\delta)$ such that a mechanism satisfies $\varepsilon'(\delta)$-PML w.p. $1-\delta$. Let us begin by examining the latter case. To utilize the statement of Theorem~\ref{thm:weissman}, we need to be able to translate between the probability of failure $\delta$, and the required radius $\beta$ of the uncertainty set in \eqref{eq:l1uncertaintyset}. To this end, for $\delta \in [0,1]$, define
\begin{equation}
\label{eq:betastar}
    \beta^*(\delta) \coloneqq \sqrt{\frac{2}{m}(\log(2^N-2)-\log \delta)}.
\end{equation}
By Theorem~\ref{thm:weissman}, $\beta^*(\delta)$ is the radius of an $\ell_1$-ball around $\hat P_X$ in which the true distribution $P_X$ will lie with probability at least $1-\delta$. We can use Theorem~\ref{thm:l1gamma} to bound the leakage inside of this set as $C\big(P_{Y|X},\mathcal B_{\beta^*(\delta)}(\hat P_X)\big) \leq \varepsilon'(\delta)$, where
\begin{equation}
    \varepsilon'(\delta) \coloneqq \varepsilon + \log \bigg(1-\frac{\beta^*(\delta)e^{\varepsilon}}{2}\bigg)^{-1}.
     \label{eq:epsmax}
\end{equation}
 That is, the mechanism will satisfy $\varepsilon'$-PML for all $P_X \in \mathcal B_{\beta^*(\delta)}(\hat P_X)$. If we now fix some $\varepsilon'>\varepsilon$ instead, we may use the same chain of argument to arrive at the following result.

\begin{corollary}
\label{corr:deltamin}
Assume $P_{Y|X}$ satisfies $\varepsilon$-PML with respect to $\hat P_X$ estimated form $m$ data samples such that the assumptions in Theorem~\ref{thm:l1gamma} hold. Then for any $\varepsilon'>\varepsilon$, the mechanism satisfies $(\varepsilon',\delta_{\min})$-PML, where,
    \begin{equation}
\label{eq:rrdeltamin}
    \delta_{\min}(\epsilon') \leq (2^N - 2)\exp\biggl(-2m(e^{-\epsilon}-e^{-\varepsilon'})^2\biggr).
\end{equation}
\end{corollary}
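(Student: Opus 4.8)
The plan is to chain together the leakage-increase bound of Corollary~\ref{corr:leakagebound} (equivalently Theorem~\ref{thm:l1gamma}) with the large-deviation bound of Theorem~\ref{thm:weissman}, choosing the radius of the $\ell_1$-uncertainty ball so that the leakage guaranteed inside it is exactly $\varepsilon'$. First I would fix
\begin{equation}
    \beta \coloneqq 2\bigl(e^{-\varepsilon} - e^{-\varepsilon'}\bigr),
\end{equation}
which is strictly positive because $\varepsilon' > \varepsilon$, and satisfies $\beta < 2e^{-\varepsilon}$ so that Corollary~\ref{corr:leakagebound} is applicable — here the standing hypothesis ``such that the assumptions in Theorem~\ref{thm:l1gamma} hold'' absorbs the remaining requirement $\beta < 2p_{\min}$ with $p_{\min} = \min_x \hat P_X(x)$.

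Next I would apply Corollary~\ref{corr:leakagebound} with $\hat P_X$ in the role of the reference distribution and the (random) true distribution $P_X$ in the role of $Q_X$: whenever $P_X \in \mathcal B_\beta(\hat P_X)$,
\begin{equation}
    \varepsilon_{\min}(P_{Y|X}, P_X) \leq \varepsilon + \log\Bigl(1 - \tfrac{e^{\varepsilon}\beta}{2}\Bigr)^{-1}.
\end{equation}
The one genuinely substantive step is the algebraic simplification: with the chosen $\beta$, $\tfrac{e^{\varepsilon}\beta}{2} = e^{\varepsilon}(e^{-\varepsilon} - e^{-\varepsilon'}) = 1 - e^{\varepsilon-\varepsilon'}$, so $1 - \tfrac{e^{\varepsilon}\beta}{2} = e^{\varepsilon-\varepsilon'}$ and the bound collapses to $\varepsilon + (\varepsilon' - \varepsilon) = \varepsilon'$. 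Hence on the event $E \coloneqq \{\|\hat P_X - P_X\|_1 \leq \beta\}$ the mechanism satisfies $\varepsilon'$-PML with respect to the true $P_X$, i.e.\ $\ell_{P_X \times P_{Y|X}}(X\to y) \leq \varepsilon'$ for $P_Y$-almost every $y$; on the complement $E^c$ no claim is made. Exactly as in the argument of Section~\ref{sec:approxPML}, this gives $\delta_{\min}(\varepsilon') \leq \mathbb P[E^c] = \mathbb P\bigl[\|\hat P_X - P_X\|_1 > \beta\bigr]$.

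Finally I would invoke Theorem~\ref{thm:weissman}, together with $\{\,\|\hat P_X - P_X\|_1 > \beta\,\} \subseteq \{\,\|\hat P_X - P_X\|_1 \geq \beta\,\}$, to get $\mathbb P[E^c] \leq (2^N - 2)\exp\!\bigl(-m\varphi(\kappa_{P_X})\beta^2/4\bigr)$, and then use the uniform lower bound $\varphi(\kappa_{P_X}) \geq 2$ noted right after the definition of $\varphi$ to make this distribution-independent, yielding $\mathbb P[E^c] \leq (2^N-2)\exp(-m\beta^2/2)$. Substituting $\beta^2/2 = 2(e^{-\varepsilon} - e^{-\varepsilon'})^2$ gives precisely the stated bound on $\delta_{\min}(\varepsilon')$. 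I do not anticipate a real obstacle; the only care needed is bookkeeping on hypotheses (that $\beta$ is small enough for Theorem~\ref{thm:l1gamma}), consistency of the strict/non-strict inequalities in the definition of $\delta_{\min}$, and noting that the resulting bound is only informative when its right-hand side is below $1$ (otherwise $\delta_{\min} = 1$ trivially, consistent with the fact that every mechanism satisfies $(\cdot,1)$-PML).
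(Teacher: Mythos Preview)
Your proposal is correct and follows essentially the same route as the paper's proof: both fix the $\ell_1$-radius so that the leakage bound from Theorem~\ref{thm:l1gamma}/Corollary~\ref{corr:leakagebound} (equivalently, rearranging \eqref{eq:epsmax}) hits exactly $\varepsilon'$, then invert the distribution-independent form of Theorem~\ref{thm:weissman} (i.e.\ \eqref{eq:betastar} with $\varphi(\kappa_{P_X})\geq 2$) to read off $\delta$. Your write-up simply makes explicit the algebra $\beta = 2(e^{-\varepsilon}-e^{-\varepsilon'})$ and the event-based bookkeeping that the paper leaves implicit.
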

\begin{proof}
     Fix some $\varepsilon' > \varepsilon$. By rearranging \eqref{eq:epsmax}, we may find the value of $\beta^*(\delta)$ such that the leakage capacity on the set $\mathcal B_{\beta^*(\delta)}(\hat P_X)$ is upper bounded by $\varepsilon'$. Further solving for $\delta$ in \eqref{eq:betastar} yields the presented bound on $\delta_{\min}$.
\end{proof}

Note that Corollary~\ref{corr:deltamin} illustrates a few intuitive facts: data-generating distributions are harder to estimate for larger alphabet sizes; access to many data samples decreases the probability of failure of any $\varepsilon'$-PML guarantee; small deviations form the design-parameter $\epsilon$ are hard to achieve if only few data samples are available to estimate the data-generating distribution. For the case $N=20$, $\epsilon = \log(5)$, this behavior is shown in Figure~\ref{fig:deltaminplot:subfig:deltaeps}. Similarly, we can directly use \eqref{eq:epsmax} to determine the maximum increase in PML a mechanism incurs given a specified (upper bound on the) probability of failure $\delta$. This tradeoff between sample size $m$ and PML guarantee is depicted in Figure~\ref{fig:deltaminplot:subfig:epsdelta}. As can be seen from Figure~\ref{fig:deltaminplot}, the required sample size decreases quickly with small increases in the allowed variation in $\varepsilon$. In practice, this can be incorporated into mechanism design by first specifying a required $(\varepsilon',\delta)$-PML guarantee, and then designing a mechanism with a suitable $\varepsilon < \varepsilon'$ which yields the required guarantee.

\section{Optimal Mechanisms under $\ell_1$-uncertainty}
\label{sec:optimalbinary}
In this section, we will explore some results on optimality of extremal mechanisms in the distribution estimation scenario presented in Section~\ref{sec:distest}. 
We consider the following optimization problem: Let $\hat P_X$ be the distribution estimate \eqref{eq:empiricaldistribution}, and let $\beta$ be given by \eqref{eq:betastar}. With these quantities, define $\mathcal B_\beta(\hat P_X)$ as in \eqref{eq:l1uncertaintyset}. Then for some $\varepsilon \geq 0$, we aim to solve
\begin{align}
\label{eq:optprob}
    \max_{P_{Y|X} \in \mathcal M(\varepsilon,\mathcal B_\beta(\hat P_X))} &U(P_{Y|X},\hat P_X),
\end{align}
where $\mathcal M(\varepsilon,\mathcal P)$ is the set of all $N\times N$ mechanisms satisfying $\varepsilon$-PML with respect to all distributions in $\mathcal P$ in \eqref{eq:MmathcalP}, and $U$ is a sub-convex function as defined in \eqref{eq:subconvexdef}.
Firstly, we present a closed-form binary mechanism that maximizes sub-convex functions under this distributional uncertainty. The proof utilizes the techniques previously presented in \cite{grosse2023extremal}, and is a straightforward application of the results therein.
\begin{theorem}
\label{thm:optimalbinary}
    Let $\mathcal X \coloneqq \{x_1,x_2\}$. Given $\hat P_X\in \interior(\mathcal P_{\mathcal X})$ with $\hat{P}_X(x_1) \eqqcolon p_1 \geq p_2 \coloneqq \hat P_X(x_2)$, let $\beta < 2p_2$. Then for any $0 \leq \epsilon \leq -\log(p_1-\nicefrac{\beta}{2})$, $P^*_{Y|X}$ is a solution to \eqref{eq:optprob} with 
    \begin{equation}
        P^*_{Y|X} = \frac{1}{1+\beta e^{\epsilon}}\begin{bmatrix}
            e^{\epsilon}(1-p_1+\nicefrac{\beta}{2}) & 1-e^{\epsilon}(1-p_1-\nicefrac{\beta}{2}) \\
            1-e^{\epsilon}(p_1-\nicefrac{\beta}{2}) & e^{\epsilon}(p_1+\nicefrac{\beta}{2})
        \end{bmatrix}
    \end{equation}
\end{theorem}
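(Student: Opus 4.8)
The plan is to replace the infinite family of distributional constraints by two extreme ones and then reduce \eqref{eq:optprob} to a finite vertex enumeration. First, by Corollary~\ref{corr:extremalSet}: for a binary alphabet the simplex is the segment $\{(p,1-p):p\in[0,1]\}$, on which the $\ell_1$-distance to $\hat P_X$ equals $2\abs{p-p_1}$, so $\mathcal B_\beta(\hat P_X)$ is the sub-segment whose extreme points are $\pi^{(1)}=(p_1-\tfrac\beta2,\,p_2+\tfrac\beta2)$ and $\pi^{(2)}=(p_1+\tfrac\beta2,\,p_2-\tfrac\beta2)$; the assumption $\beta<2p_2$ keeps both in $\interior(\mathcal P_{\mathcal X})$. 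Hence $\mathcal M(\varepsilon,\mathcal B_\beta(\hat P_X))=\mathcal M(\varepsilon,\pi^{(1)})\cap\mathcal M(\varepsilon,\pi^{(2)})\eqqcolon\mathcal F$, and \eqref{eq:optprob} becomes the maximization of a sub-convex $U(\cdot,\hat P_X)$ over the polytope $\mathcal F$.

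Next I would coordinatize a binary mechanism by $(a,b)=(P_{Y|X=x_1}(y_1),P_{Y|X=x_2}(y_1))\in[0,1]^2$ and write out the four $\varepsilon$-PML constraints (outcome $y_1$ or $y_2$ against $\pi^{(1)}$ or $\pi^{(2)}$) as linear inequalities in $(a,b)$. By the symmetry $(a,b)\mapsto(1-a,1-b)$---which maps $\mathcal F$ onto itself (it merely exchanges the $y_1$- and $y_2$-constraints) and leaves every sub-convex $U$ invariant---it suffices to work on $\{a\ge b\}$. There $\max_x P_{Y|X=x}(y_1)=a$ and $\max_x P_{Y|X=x}(y_2)=1-b$; moreover, since $\pi^{(1)},\pi^{(2)}$ differ only in the sign of their $\pm\tfrac\beta2$ deviation from $\hat P_X$, the $\pi^{(1)}$-induced marginal of $y_1$ is $\beta(b-a)\le0$ smaller than the $\pi^{(2)}$-induced one, and for $y_2$ it is $\beta(a-b)\ge0$ larger. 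Hence on $\{a\ge b\}$ the $y_1$-constraint is strictest against $\pi^{(1)}$ and the $y_2$-constraint strictest against $\pi^{(2)}$; the two remaining inequalities are implied. The hypotheses $\varepsilon\ge0$ and $\varepsilon\le-\log(p_1-\tfrac\beta2)$ ensure that these two surviving inequalities are proper half-planes (coefficient $1-e^{\varepsilon}(p_1-\tfrac\beta2)\ge0$) and that the two ``independent'' mechanisms $(0,0),(1,1)$ are feasible. The feasible region on $\{a\ge b\}$ is therefore a triangle with vertices $(0,0),(1,1)$ and the unique point where both surviving constraints are active; solving that $2\times2$ system yields exactly $P^{*}_{Y|X}$, whose entries lie in $[0,1]$ precisely because $1-e^{\varepsilon}(p_1-\tfrac\beta2)\ge0$ (this forces the $(2,1)$-entry nonnegative; the remaining three follow from it together with $p_1\ge p_2$). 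Reflecting, $\mathcal F$ is a quadrilateral with vertices $(0,0),(1,1),P^{*}_{Y|X}$ and the reflection $P^{**}_{Y|X}$ of $P^{*}_{Y|X}$.

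Finally I would invoke convexity. A sub-convex $U(\cdot,\hat P_X)=\sum_y\mu([P_{Y|X=x}(y)]_x,\hat P_X)$ is a sum of sub-linear---hence convex---functions of linear images of $P_{Y|X}$, so it is convex on $\mathcal F$ and attains its maximum at a vertex. By the above symmetry, $U(P^{*}_{Y|X},\hat P_X)=U(P^{**}_{Y|X},\hat P_X)$. Writing $v_y\coloneqq[P_{Y|X=x}(y)]_{x\in\mathcal X}$ for the column of the mechanism at $y$, the subadditivity of a sub-linear $\mu$ (positively homogeneous and convex) gives
\begin{equation}
U(P^{*}_{Y|X},\hat P_X)=\mu(v_{y_1},\hat P_X)+\mu(v_{y_2},\hat P_X)\ge\mu(v_{y_1}+v_{y_2},\hat P_X)=\mu(\mathbf 1,\hat P_X),
\end{equation}
and $\mu(\mathbf 1,\hat P_X)$ is exactly the value of $U$ at the degenerate vertices $(0,0),(1,1)$ (one column is $\mathbf 1$, the other $\mathbf 0$, and $\mu(\mathbf 0,\hat P_X)=0$ by homogeneity). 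Hence $P^{*}_{Y|X}$ maximizes $U$ over $\mathcal F$, as claimed.

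I expect the main obstacle to be the constraint-reduction and vertex-identification step---showing which of the four PML inequalities are redundant on each half-plane, and handling the boundary regimes $\varepsilon=0$ (where $P^{*}_{Y|X}$ collapses to an output-independent mechanism) and $\varepsilon=-\log(p_1-\tfrac\beta2)$ (where $P^{*}_{Y|X}$ acquires a zero entry and one surviving constraint becomes vacuous). The remaining pieces, namely identifying $P^{*}_{Y|X}$ as the solution of the active $2\times2$ system and verifying its entries are valid probabilities, are routine algebra of the kind already done in \cite{grosse2023extremal} for a single data distribution.
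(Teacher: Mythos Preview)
Your proposal is correct and follows essentially the same route as the paper's proof: reduce $\mathcal B_\beta(\hat P_X)$ to its two extreme distributions via Corollary~\ref{corr:extremalSet}, restrict (by symmetry) to the half-plane where one column dominates, identify which of the four PML inequalities is binding there (one per extreme distribution), and solve the resulting $2\times 2$ linear system to obtain $P^{*}_{Y|X}$. The only difference is one of emphasis: the paper relegates the ``why the optimum sits at a non-trivial vertex'' step to \cite{grosse2023extremal}, whereas you spell out the convexity-of-$U$ and the subadditivity comparison with the degenerate vertices $(0,0),(1,1)$ explicitly; both lead to the same conclusion.
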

\begin{proof}
 See Appendix~\ref{app:proofpotimalbinary}.
\end{proof}

\begin{remark}
\label{rem:binarymechisRR}
    For $p_1 = 0.5$ and $\beta = 1$, we have $\mathcal B_\beta(\hat P_X) = \mathcal P_{\mathcal X}$, that is, $\mathcal B_\beta(\hat P_X)$ contains all possible distributions on a binary alphabet. The mechanism then becomes
    \begin{equation}
        P^*_{Y|X} = \frac{1}{1+e^{\varepsilon}}\begin{bmatrix}
            e^{\varepsilon} & 1 \\
            1 & e^\varepsilon
        \end{bmatrix},
    \end{equation}
    which is exactly Warner's randomized response mechanism for local differential privacy \cite{warnerRRoriginal,extremalmechanismLong}. This is expected, since guaranteeing $\varepsilon$-PML for all possible data distributions is equivalent to guaranteeing $\varepsilon$-LDP \cite{saeidian2023pointwise,10619510}.
\end{remark}

Theorem~\ref{thm:optimalbinary} provides a closed form expression for the optimal binary mechanism given the $\ell_1$-radius of the uncertainty set. The condition $\varepsilon < -\log(p_1-\nicefrac{\beta}{2})$ ensures that $\varepsilon$ will remain in the \emph{first} privacy region for all data distributions in $\mathcal B_\beta(\hat P_X)$.\footnote{Note that for the remaining \emph{second} privacy region, the same proof technique can be applied by taking into account the box-constraint $0\leq P_{Y|X=x}(y)\leq 1$. We omit this case for the sake of clarity.} For general alphabets, and any value of $\varepsilon\geq 0$, the following result shows that the mechanism design problem can be solved by a linear constraint convex program.
\begin{proposition}
\label{prop:LCCP}
    Let $\hat P_X \in \interior(\mathcal P_{\mathcal X})$ and $\beta < 2\hat P_X(x)$ for all $x \in \mathcal X$. Let $\mathcal B_\beta(\hat P_X)$ be according to \eqref{eq:l1uncertaintyset}.
    For $i,j \in [N]$, denote $P_{Y|X=x_i}(y_j) \eqqcolon p_{ij}$.
    For any $\varepsilon\geq0$, the set $\mathcal M(\varepsilon,\mathcal B_\beta(\hat P_X))$ is a convex polytope in $[0,1]^{N\times N}$ given by the following linear constraints. 
\begin{IEEEeqnarray*}{lCr}
\label{eq:LPconst1}
        &p_{ij}  \leq e^{\varepsilon'}\bigg(\sum_{i=1}^N\hat P_X(x_i)p_{ij} + \frac{\beta} {2}p_{i'j}\bigg) \quad \forall i,i',j\in [N],& \yesnumber \\
\label{eq:LPconst2}
        &\sum_{j=1}^N p_{ij} = 1 \qquad \forall i \in [N],&\yesnumber\\
\label{eq:LPconst3}
        &p_{ij} \geq 0 \qquad \forall i,j \in [N].\yesnumber&
    \end{IEEEeqnarray*}
with the modified privacy parameter $\varepsilon'$ given by
\begin{equation}
\label{eq:optimalepsdash}
    \varepsilon' = \log \frac{e^{\varepsilon}}{1+\frac{\beta}{2}e^{\varepsilon}}.
\end{equation}
\end{proposition}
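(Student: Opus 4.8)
The plan is to peel off the two layers of definitions. By \eqref{eq:MmathcalP}, a mechanism $P_{Y|X}$ lies in $\mathcal M(\varepsilon,\mathcal B_\beta(\hat P_X))$ iff $P_{Y|X}\in\mathcal M(\varepsilon,Q_X)$ for every $Q_X\in\mathcal B_\beta(\hat P_X)$, and by the $D_\infty$-representation of PML (together with the definition of $\varepsilon_{\min}$) this holds iff for all $i,j\in[N]$ and all $Q_X\in\mathcal B_\beta(\hat P_X)$ we have $p_{ij}\le e^{\varepsilon}\sum_{i'=1}^N Q_X(x_{i'})p_{i'j}$ (for columns $j$ with $Q_Y(y_j)=0$ this reads $0\le 0$, since $Q_X$ has full support forces the whole column to vanish). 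Equivalently, writing $\langle Q_X,c^{(j)}\rangle:=\sum_{i'}Q_X(x_{i'})p_{i'j}$, the mechanism is feasible iff $p_{ij}\le e^{\varepsilon}m_j$ for all $i,j$, where
\[
  m_j:=\inf_{Q_X\in\mathcal B_\beta(\hat P_X)}\ \langle Q_X,c^{(j)}\rangle .
\]
The core of the proof is to evaluate $m_j$ in closed form and then to rewrite the family $\{p_{ij}\le e^{\varepsilon}m_j\}_{i,j}$ in the symmetric form \eqref{eq:LPconst1}.

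For the inner optimization, fix $j$ and abbreviate $c_{i'}=p_{i'j}$. Since $\langle\cdot,c^{(j)}\rangle$ is continuous and the closure of $\mathcal B_\beta(\hat P_X)$ inside $\mathcal P_{\mathcal X}$ is compact, the infimum equals the minimum over this closure, so the full-support restriction is harmless. Writing $Q_X=\hat P_X+\Delta$ with $\sum_{i'}\Delta_{i'}=0$ and decomposing $\Delta=\Delta^+-\Delta^-$ into nonnegative vectors with disjoint supports, feasibility forces $\|\Delta^+\|_1=\|\Delta^-\|_1=t\le\beta/2$ and $\Delta^-\le\hat P_X$ entrywise, whence $\langle\Delta,c^{(j)}\rangle\ge t\bigl(\min_{i'}c_{i'}-\max_{i'}c_{i'}\bigr)$. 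The decisive observation is that this bound is attained at $t=\beta/2$: because $\beta<2\hat P_X(x)$ for every $x$, the single transfer that moves mass $\beta/2$ from a coordinate where $c^{(j)}$ is maximal to a coordinate where it is minimal stays in the simplex, so
\[
  m_j=\hat P_Y(y_j)+\tfrac{\beta}{2}\Bigl(\min_{i'}p_{i'j}-\max_{i'}p_{i'j}\Bigr),\qquad \hat P_Y(y_j):=\sum_{i'}\hat P_X(x_{i'})p_{i'j}.
\]

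It remains to reconcile $\{p_{ij}\le e^{\varepsilon}m_j\}$ with \eqref{eq:LPconst1}. Fix $j$ and set $M_j=\max_{i'}p_{i'j}$, $m_j^{\star}=\min_{i'}p_{i'j}$. Among the inequalities $p_{ij}\le e^{\varepsilon}\bigl(\hat P_Y(y_j)+\tfrac{\beta}{2}m_j^{\star}-\tfrac{\beta}{2}M_j\bigr)$ over $i$, the tightest is the one with $p_{ij}=M_j$; collecting $M_j$ on the left turns it into $M_j\bigl(1+\tfrac{\beta}{2}e^{\varepsilon}\bigr)\le e^{\varepsilon}\bigl(\hat P_Y(y_j)+\tfrac{\beta}{2}m_j^{\star}\bigr)$, i.e. $M_j\le e^{\varepsilon'}\bigl(\hat P_Y(y_j)+\tfrac{\beta}{2}m_j^{\star}\bigr)$ with $e^{\varepsilon'}=e^{\varepsilon}/(1+\tfrac{\beta}{2}e^{\varepsilon})$, which is exactly \eqref{eq:optimalepsdash}; conversely, once $M_j$ satisfies this single inequality, so does every $p_{ij}\le M_j$. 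Finally, $M_j\le e^{\varepsilon'}\bigl(\hat P_Y(y_j)+\tfrac{\beta}{2}m_j^{\star}\bigr)$ is equivalent to ``$p_{ij}\le e^{\varepsilon'}\bigl(\hat P_Y(y_j)+\tfrac{\beta}{2}p_{i'j}\bigr)$ for all $i,i'$'': one direction takes $i,i'$ attaining $M_j,m_j^{\star}$, the other uses $p_{ij}\le M_j$ and $p_{i'j}\ge m_j^{\star}$. This proves the equivalence with \eqref{eq:LPconst1}--\eqref{eq:LPconst3}; since these constraints are affine in the entries $p_{ij}$ and the entries lie in $[0,1]$, the set is a bounded polyhedron, hence a convex polytope.

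The step I expect to be the main obstacle is the exact evaluation of $m_j$: one has to justify that the worst-case prior is a single-coordinate-to-single-coordinate mass transfer and that it is feasible --- precisely the point where the hypothesis $\beta<2\hat P_X(x)$ is used, since it lets the maximizing coordinate give up $\beta/2$ of mass without becoming negative --- and also that passing from the open interior to the compact closed ball does not change the value. The remaining manipulations (absorbing the $-\tfrac{\beta}{2}M_j$ term into the modified parameter $\varepsilon'$, and replacing the $\min/\max$ over $i'$ by the universal quantifier in \eqref{eq:LPconst1}) are routine but require some care with the indices.
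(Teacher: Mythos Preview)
Your proof is correct and follows essentially the same route as the paper: both reduce membership in $\mathcal M(\varepsilon,\mathcal B_\beta(\hat P_X))$ to the inequality $\max_i p_{ij}\le e^{\varepsilon}\bigl(\hat P_Y(y_j)-\tfrac{\beta}{2}(\max_{i'}p_{i'j}-\min_{i'}p_{i'j})\bigr)$ by identifying the worst-case prior as the single $\beta/2$ mass transfer from the maximizing to the minimizing coordinate, and then absorb the $-\tfrac{\beta}{2}\max_{i'}p_{i'j}$ term into the modified parameter $\varepsilon'$. Your write-up is in fact more careful than the paper's on two points the paper leaves implicit: the feasibility of the extremal transfer (where $\beta<2\hat P_X(x)$ enters) and the passage from the $\max/\min$ formulation to the universally quantified form \eqref{eq:LPconst1}.
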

\begin{proof}
    See Appendix \ref{app:proofLCCP}.
\end{proof}

Proposition \ref{prop:LCCP} shows that the optimal mechanisms with respect to any convex utility function will be at one of the extreme points of the polytope $\mathcal M(\varepsilon,\mathcal B_\beta(\hat P_X))$ defined by the linear constraints \eqref{eq:LPconst1}-\eqref{eq:LPconst3}. We point out that with the additional condition $\min_{i\in[N]}p_{ij} = 0$ for all $j \in [N]$, the constraints become exactly equivalent to the constraints of the polytope of valid mechanisms satisfying $\varepsilon'$-PML with respect to the fixed data distribution $\hat P_X$ in \cite[Lemma 1]{grosse2023extremal}. This observation implies that with these additional constraints, the optimal mechanisms in $\mathcal M(\varepsilon',\hat P_X)$ presented in \cite{grosse2023extremal} are also optimal for $\mathcal M(\varepsilon,\mathcal B_\beta(\hat P_X))$, with $\varepsilon'$ obtained from the parameter transformation in \eqref{eq:optimalepsdash}. It can also be seen from \eqref{eq:Spx_l1bound} and \eqref{eq:optimalepsdash} that $\varepsilon'$ is exactly the upper bound on the leakage capacity for $\ell_1$-uncertainty sets in Theorem \ref{thm:l1gamma} in the case $k>1$. This implies
\begin{equation}
\label{eq:LPsubset}
    \mathcal M(\varepsilon',\hat P_X) \subseteq \mathcal M\Big(\varepsilon,\mathcal B_\beta(\hat P_X)\Big).
\end{equation}
That is, mechanisms designed to satisfy $\varepsilon'$-PML with respect to the distribution estimate $\hat P_X$ will satisfy $\varepsilon$-PML with respect to \emph{all} distributions in $\mathcal B_\beta(\hat P_X)$. Since $\mathcal M(\varepsilon',\hat P_X)$ may be a proper subset of $\mathcal M(\varepsilon,\mathcal B_\beta(\hat P_X))$, the optimality statements can not be directly transferred. However, as $\min_{i}p^*_{ij} \to 0$ for the extreme points $[p^*_{ij}]_{ij}$ of $\mathcal M(\varepsilon,\mathcal B_\beta(\hat P_X))$, the discrepancy between the optimal fixed-distribution mechanisms and the true extreme points will vanish.

\section{Additive Noise Mechanisms}
\label{sec:examples}
In this section, we design \emph{additive noise mechanisms} with the empirical estimate \eqref{eq:empiricaldistribution} computed from $m$ samples using the framework presented above. We will separately consider additive Laplace noise, and additive Gaussian noise in a \emph{local} model of noise addition, that is, the noise is added independently to each single data sample.
Throughout this section, let $X$ be a binary random variable defined on $\mathcal X = \{-1,1\}$. Consider a dataset $\{d_i\}_{i=1}^m$ where each $d_i$ is drawn independently according to a (unknown) distribution $P_X$. 

 As before, we will often need to translate between the probability of failure $\delta$ and the $\ell_1$-radius of the uncertainty sets in \eqref{eq:l1uncertaintyset}. Given any $\delta \in (0,1]$, for binary $X$, \eqref{eq:betastar} becomes
    \begin{equation}
    \label{eq:binarybeta}
        \beta^*(\delta) = \sqrt{\frac{2(\log 2 - \log \delta)}{m}}.
    \end{equation}
    
Assume that $Y$ is a random variable on the measure space $(\mathbb R,\mathfrak B(\mathbb R),\mathbb P)$, e.g., the random variable induced by adding Laplace or Gaussian noise to the binary random variable $X$. Then we use use $P_Y$ to denote the distribution of $Y$, and $P_{Y|X=x}$ to denote the distribution of $Y$ given that $X=x$. Since both Laplace and Gaussian noise are absolutely continuous with respect to the Lebesgue measure, we use $f_Y$ and $f_{Y|X=x}$ to denote the corresponding pdf induced by $P_Y$ and $P_{Y|X=x}$, respectively. For bounded densities, the leakage is
\begin{equation}
    \ell(X\to y) = \log \frac{\max_{x\in\mathcal X}f_{Y|X=x}(y)}{f_Y(y)}, \quad y\in \mathbb R.
\end{equation}
For a comprehensive treatment of PML on general probability spaces, see \cite{saeidian2023pointwisegeneral}. We begin by examining the addition of Laplace noise.
\subsection{Binary Laplace Mechanism}
\label{subsec:laplace}
 Consider the additive Laplace  mechanism defined by
 \begin{equation}
 \label{eq:laplacemech}
     Y = X + L,
 \end{equation}
 where $L$ is zero mean Laplace with scale parameter $b$, that is,
 \begin{equation}
     f_L(t) = \frac{1}{2b}\exp\bigg(-\frac{|t|}{b}\bigg), \quad t \in \mathbb R.
 \end{equation}
 For this fixed mechanism, we only need to find a suitable scale parameter $b$ such that the mechanism satisfies $(\epsilon,\delta)$-PML. We first find the relationship between $\varepsilon$ and $b$ for a fixed (and known) data distribution $P_X$.

\begin{proposition}
\label{prop:Laplaceleakage}
    Assume a binary random variable $X$ defined on $\mathcal{X} = \{-1,1\}$ to be distributed according to $P_X$ and let $p_{\min} \coloneqq \min_{x\in\{-1,1\}}P_X(x)$. The Laplace mechanism in \eqref{eq:laplacemech} satisfies $\mathcal \varepsilon(b)$-PML, where
    \begin{equation}
        \varepsilon(b) \coloneqq \varepsilon_{\min}(f_{Y|X},P_X) = \frac{2}{b} - \log(e^{\nicefrac{2}{b}}p_{\min} + 1-p_{\min}).
    \end{equation}
\end{proposition}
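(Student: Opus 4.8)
The plan is to compute $\varepsilon_{\min}(f_{Y|X}, P_X)$ directly from its definition, which for bounded densities on $\mathbb{R}$ reads
\begin{equation*}
    \varepsilon_{\min}(f_{Y|X},P_X) = \esssup_{y \in \mathbb{R}} \, \ell(X \to y) = \esssup_{y\in\mathbb{R}} \, \log \frac{\max_{x\in\{-1,1\}} f_{Y|X=x}(y)}{f_Y(y)}.
\end{equation*}
First I would write out the two conditional densities explicitly: $f_{Y|X=1}(y) = \frac{1}{2b}e^{-|y-1|/b}$ and $f_{Y|X=-1}(y) = \frac{1}{2b}e^{-|y+1|/b}$, and set $q \coloneqq P_X(1)$, $1-q = P_X(-1)$, so $f_Y(y) = q\,f_{Y|X=1}(y) + (1-q)\,f_{Y|X=-1}(y)$. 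The ratio $\frac{\max_x f_{Y|X=x}(y)}{f_Y(y)}$ then only depends on $y$ through the ratio $r(y) \coloneqq f_{Y|X=1}(y)/f_{Y|X=-1}(y) = e^{(|y+1|-|y-1|)/b}$. A short case analysis on the three regions $y\le -1$, $-1\le y\le 1$, $y\ge 1$ shows that $|y+1|-|y-1|$ increases monotonically from $-2$ to $+2$, so $r(y)$ ranges over $[e^{-2/b}, e^{2/b}]$, achieving each endpoint on a set of positive measure (the rays $y\ge 1$ and $y\le -1$).

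Next I would reduce the leakage to a function of $r$. Dividing numerator and denominator by $f_{Y|X=-1}(y)$ gives, for $y$ such that $f_{Y|X=1}(y)\ge f_{Y|X=-1}(y)$ (i.e.\ $r\ge 1$),
\begin{equation*}
    \ell(X\to y) = \log \frac{r}{q\,r + (1-q)},
\end{equation*}
and symmetrically, when $r\le 1$, $\ell(X\to y) = \log \frac{1}{q\,r + (1-q)}$. One checks that $g(r) \coloneqq \frac{\max(r,1)}{qr+(1-q)}$ is nondecreasing on $[e^{-2/b},\infty)$ (for $r\ge 1$ its derivative has the sign of $1-q>0$; for $r\le 1$ it is decreasing in $r$, hence also maximized at the smallest available $r$, but that value is $\le$ the value at $r=e^{2/b}$ provided $q = P_X(1)\ge \tfrac12$ — and in general one takes the max over both ends). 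The essential supremum is therefore attained at $r = e^{2/b}$, giving
\begin{equation*}
    \varepsilon_{\min}(f_{Y|X},P_X) = \log \frac{e^{2/b}}{q\,e^{2/b} + (1-q)}.
\end{equation*}
Writing $p_{\min} = \min(q,1-q)$, one must verify that replacing $q$ by $p_{\min}$ in the denominator gives the same value: indeed the two candidate maximizers $r = e^{\pm 2/b}$ yield $\log\frac{e^{2/b}}{qe^{2/b}+1-q}$ and $\log\frac{1}{qe^{-2/b}+1-q} = \log\frac{e^{2/b}}{q + (1-q)e^{2/b}}$ respectively, and the larger of the two denominators' reciprocals corresponds to the smaller of $\{q, 1-q\}$ multiplying $e^{2/b}$; thus $\varepsilon_{\min} = \log\frac{e^{2/b}}{p_{\min}e^{2/b} + 1 - p_{\min}} = \frac{2}{b} - \log(e^{2/b}p_{\min} + 1 - p_{\min})$, which is the claimed formula.

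The main obstacle is the bookkeeping around which tail ($y\to+\infty$ or $y\to-\infty$) actually realizes the supremum, i.e.\ correctly handling the $\max$ over $x$ together with the $\esssup$ over $y$ so that $p_{\min}$ (rather than $q$) appears. This is purely a matter of carefully comparing $g(e^{2/b})$ with $g(e^{-2/b})$ and observing the symmetry $q \leftrightarrow 1-q$ under $y \leftrightarrow -y$; once that is organized, the rest is the elementary region-wise computation of $|y+1|-|y-1|$ and the monotonicity of $g$. I would also remark that, since the supremum is attained on a set of positive Lebesgue measure, the guarantee is genuinely almost-sure ($(\varepsilon(b),0)$-PML), not merely a tail bound.
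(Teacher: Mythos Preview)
Your proposal is correct and follows essentially the same strategy as the paper: both compute $\ell(X\to y)$ explicitly, perform a region-wise analysis of $y$, and then identify the worst case as the one where the coefficient $p_{\min}$ multiplies $e^{2/b}$. Your use of the likelihood ratio $r(y)=e^{(|y+1|-|y-1|)/b}$ as an organizing variable is a tidy repackaging of the paper's three-case computation (the paper treats $y>1$, $y<-1$, and $-1\le y\le 1$ separately and then compares), and your explicit remark that the supremum is attained on a set of positive Lebesgue measure actually sharpens the paper's argument, which only writes the upper bound. One small expository glitch: you first assert ``$g$ is nondecreasing on $[e^{-2/b},\infty)$'' and then immediately note it is decreasing for $r\le 1$; the correct statement (which you effectively use) is that $g$ is decreasing on $(0,1]$ and increasing on $[1,\infty)$, so the essential supremum is realized at one of the two endpoints $r=e^{\pm 2/b}$, and comparing them yields $p_{\min}$.
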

\begin{proof}
   See Appendix~\ref{app:proofLaplaceLeakage}.
\end{proof}

Observe that the value of $\varepsilon(b)$ is decreasing in $p_{\min}$. This implies that the worst-case leakage with respect to any distribution in an uncertainty set $\mathcal P$ will be the value of $\varepsilon(b)$ with respect to the distribution that contains the smallest minimum probability mass $p_{\min}$. If the uncertainty set is determined from \eqref{eq:l1uncertaintyset} with $\beta^*(\delta)$ from \eqref{eq:betastar}, we can therefore bound the leakage of the binary Laplace mechanism with respect to some $\hat P_X$ estimated from $m$ samples according to \eqref{eq:empiricaldistribution} and some $\delta$ as 
\begin{equation}
\label{eq:uncertainlaplace}
    \varepsilon(b,\delta) = \frac{2}{b} - \log \bigg[\bigg(\hat p_{\min}-\frac{\beta^*(\delta)}{2}\bigg)\big(e^{\nicefrac{2}{b}}-1\big)+1\bigg],
\end{equation}
with $\hat p_{\min} = \min_{x\in\mathcal X} \hat P_X(x)$ from the distribution estimate.

\begin{remark}
\label{rem:laplacereducestoldp}
    If $\hat p_{\min} = \nicefrac{1}{2}$ and $\beta^*(\delta) = 1$, then \eqref{eq:uncertainlaplace} yields
    \begin{equation}
        \varepsilon(b,\delta) = \nicefrac{2}{b}.
    \end{equation}
    That is, the $\varepsilon$-PML guarantee of the binary Laplace mechanism is equivalent to the $\varepsilon$-LDP guarantee of the mechanism \cite{dwork2014algorithmic}. As in Remark~\ref{rem:binarymechisRR}, observe that $\hat p_{\min} = \nicefrac{1}{2}$ implies that $\hat P_X = [\nicefrac{1}{2},\nicefrac{1}{2}]$, which together with $\beta^*(\delta) = 1$ implies that the $\mathcal \varepsilon(b)$-PML guarantee holds for all possible data distributions on the alphabet $\mathcal X$. Since satisfying $\varepsilon(b)$-PML for all data distributions is equivalent to satisfying $\varepsilon(b)$-LDP  \cite{saeidian2023pointwise,10619510}, this is a consequence of the formulation of approximate PML.
\end{remark}
\begin{comment}
\begin{figure}
    \centering
    \includegraphics[scale=0.45]{}
    \caption{$(\varepsilon,\delta)$-curves for the binary Laplace mechanism as presented in Section~\ref{subsec:laplace} for different scale parameters, sample sizes and estimated distributions. Scenario 1 corresponds to $m=50$, $p_{\min}=0.5$, while Scenario 2 corresponds to $m=500$, $p_{\min} = 0.25$.}
    \label{fig:laplaceepsdelta}
\end{figure}

Figure~\ref{fig:laplaceepsdelta} shows a few example $(\varepsilon(b,\delta),\delta)$-curves with different choices of $b$, $\hat p_{\min}$ and $m$. 
\end{comment}

\subsection{Gaussian Mechanism}
\label{subsec:gauss}
Recall that $X$ is a binary random variable defined on $\mathcal X = \{-1,1\}$. We consider the additive Gaussian mechanism
\begin{equation}
    Y = X + G,
\end{equation}
where $G \sim \mathcal N(0,\sigma^2)$ is zero-mean Gaussian, that is,
\begin{equation}
    f_G(t) = \frac{1}{\sqrt{2\pi\sigma^2}}\exp\bigg(-\frac{t^2}{2\sigma^2}\bigg), \quad t\in\mathbb R.
\end{equation}
First, assume that the true distribution $P_X$ is known. The PML with respect to this distribution for any outcome $y \in \mathbb R$ is
\begin{equation}
    \ell_{P_{Y|X}\times P_X}(X\to y) = \log \frac{\max_{x\in\mathcal X} f_{Y|X=x}(y)}{f_Y(y)},
\end{equation}
where the conditional distributions are given by
\begin{equation}
    f_{Y|X=x}(y) = \frac{1}{\sqrt{2\pi\sigma^2}}\exp\bigg(-\frac{1}{2\sigma^2}(y-x)^2\bigg),
\end{equation}
and $P_Y = P_{Y|X}\circ P_X$ is a Gaussian mixture with pdf
\begin{align}
    f_Y(y) = \frac{1}{\sqrt{2\pi\sigma^2}} \bigg[&P_X(x_1)\exp\bigg(-\frac{1}{2\sigma^2}(y+1)^2\bigg) \\[.5em] &+P_X(x_2)\exp\bigg(-\frac{1}{2\sigma^2}(y-1)^2\bigg)\bigg].
\end{align}
Next, we derive the achievable $(\varepsilon,\delta)$-PML guarantee using the estimator in \eqref{eq:empiricaldistribution} for the Gaussian mechanism: For any given $\varepsilon \geq 0$, we need to find corresponding $\delta_1 \geq 0$ such that the value $\ell(X\to y)$ is upper bounded by $\varepsilon$ with probability $1-\delta_1$ for any $P_X \in \mathcal B_\beta(\hat P_X)$. To this end, define 
\begin{equation}
    E_{\varepsilon}(P_X) \coloneqq \{y \in \mathbb R: \ell_{P_{Y|X}\times P_X}(X\to y) > \varepsilon\},
\end{equation}
as the set of outcomes $y$ that violate the $\varepsilon$-PML constraint. We want that the worst-case distribution in the uncertainty set violates this constraint with probability $\delta_1(\varepsilon)$. Thus we have,
\begin{align}    
    \delta_1(\varepsilon) &= \sup_{P_X\in\mathcal B_\beta(\hat P_X)}\mathbb P_{Y\sim P_{Y|X}\circ P_X}\big[E_\varepsilon(P_X)\big] \\[.5em]
    &= \sup_{P_X\in\mathcal B_\beta(\hat P_X)}\int_{E_\varepsilon(P_X)}d(P_{Y|X}\circ P_X)(y) \\[.5em]
    &=\sup_{P_X \in \mathcal B_\beta(\hat P_X)} \int_{E_\varepsilon(P_X)}f_Y(y)d\lambda(y) \label{eq:gaussianopt}\\[.5em]
    &= \sup_{P_X \in \mathcal B_\beta(\hat P_X)} \int_{E_\varepsilon(P_X)} \sum_{x \in\mathcal X}P_X(x)f_{Y|X=x}(y)d\lambda(y),
\end{align}
where $\lambda$ denotes the Lebesgue measure on $\mathbb R$. Define $\delta_2\geq 0$ as the maximum probability of failure tolerated for the distribution estimate $\hat P_{X}$ from $m$ samples using \eqref{eq:empiricaldistribution}. With the $\varepsilon$ defined above, we further use the result in Theorem~\ref{thm:l1gamma} to bound the leakage capacity of the mechanism by 
\begin{align}
    C(P_{Y|X},\mathcal B_\beta(\hat P_X)) &\leq \varepsilon + S_{\hat P_X}(P_{Y|X},\mathcal B_\beta(\hat P_X)) \\[.6em]& \leq \varepsilon - \log \bigg(1-\frac{\beta(\delta_2)}{2}e^{\varepsilon}\bigg) \eqqcolon \varepsilon^*
\end{align}
with probability $1-\delta_1$. By the union bound, the leakage of the binary Gaussian mechanism is hence bounded by $\varepsilon^*$ with probability $1-\delta^*$, where 
\begin{equation}
    \delta^* \coloneqq \delta_1 + \delta_2 - \delta_1\delta_2.
\end{equation}
The mechanism hence satisfies $\varepsilon^*$-PML with probability $1-\delta^*$.

\subsection{Numerical Experiments}
\begin{figure*}
    \centering
    \begin{subfigure}{.49\textwidth}
        \includegraphics[width=1\linewidth]{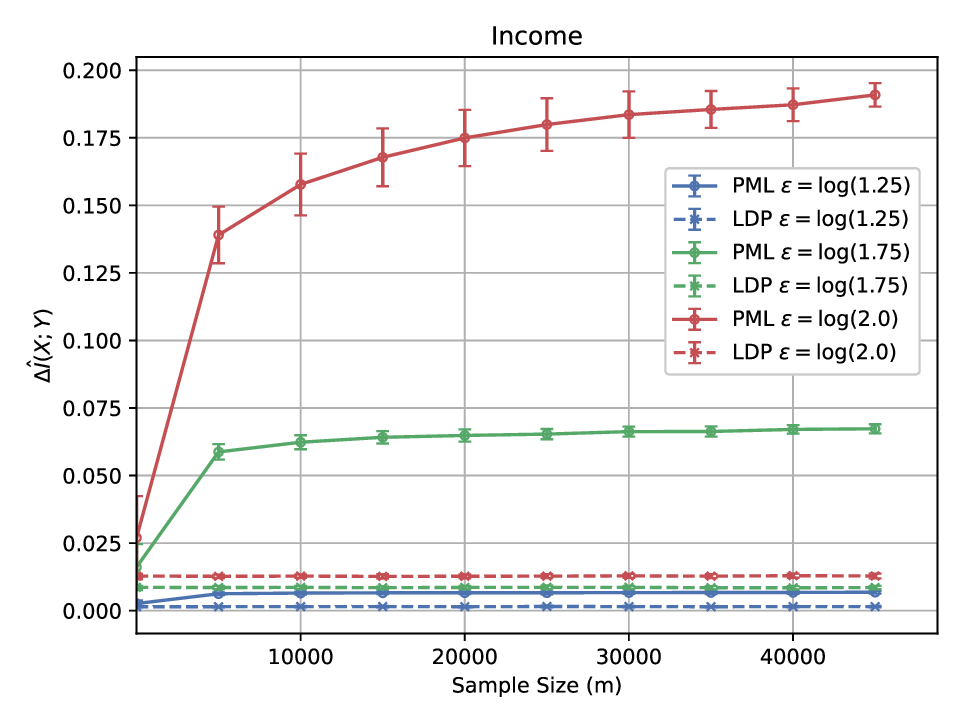}
    \end{subfigure}
    \begin{subfigure}{.49\textwidth}
        \includegraphics[width=1\textwidth]{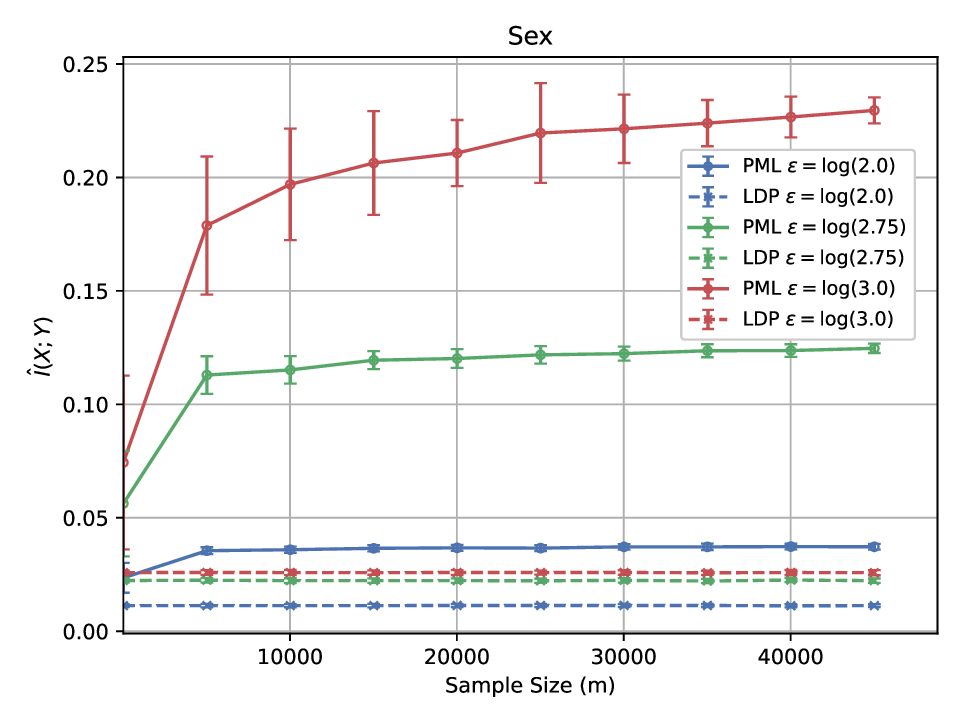}
    \end{subfigure}
    \caption{Comparison of the empirical mutual information utility between private and privatized data from the Adult dataset \cite{adult_2} as described in Section \ref{subsubsec:numex_laplace}. The probability of failure for the PML mechanism is set to $\delta = 10^{-9}$ for each simulation.}
    \label{fig:numex_laplace}
\end{figure*}
\begin{figure}
    \centering
    \includegraphics[width=1\linewidth]{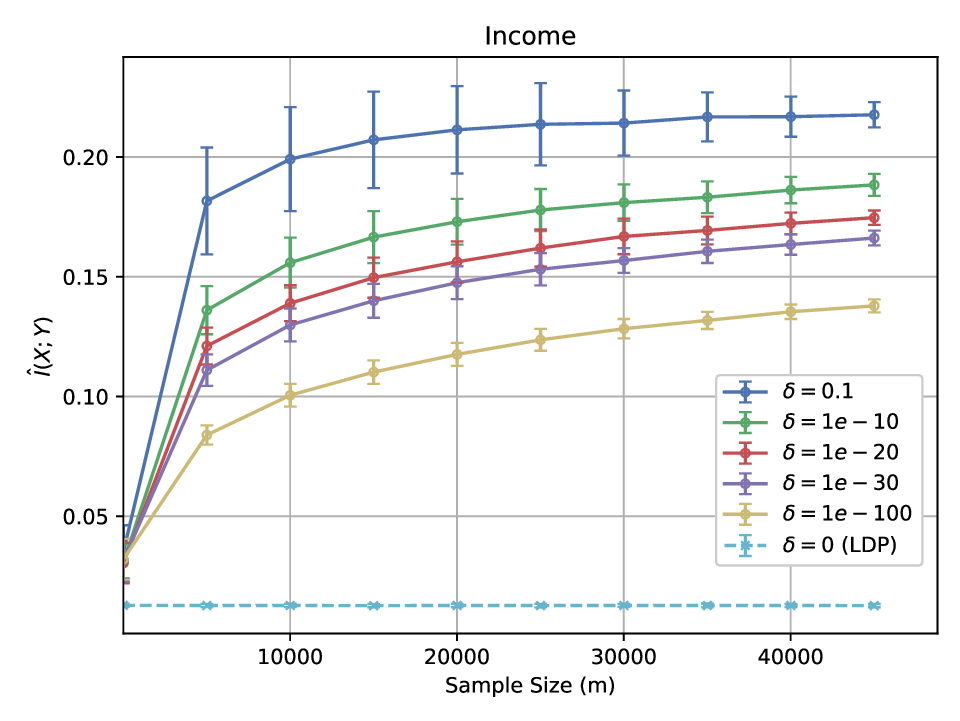}
    \caption{Comparison of empirical mutual information utility for different choices of the probability of failure $\delta$ in the binary Laplace mechanism according to Section \ref{subsec:laplace} with $\varepsilon = \log(2)$. For $\delta=0$, the Laplace mechanism for $(\varepsilon,\delta)$-PML reduces to the mechanism for $\varepsilon$-LDP (see Remark \ref{rem:laplacereducestoldp}).}
    \label{fig:laplaceoverdelta}
\end{figure}
In this section, we evaluate the presented additive noise mechanisms against similar mechanisms for local differential privacy. We use data from the Adult dataset \cite{adult_2} to evaluate the binary Laplace mechanism. Further, we numerically analyze the $(\varepsilon,\delta)$-curves for the presented Gaussian mechanism at a fixed noise variance, and compare them to the curve of mechanisms satisfying a localized version of probabilistic differential privacy as presented in \cite{zhao2019reviewingimprovinggaussianmechanism}.
\subsubsection{Laplace Mechanism}
\label{subsubsec:numex_laplace}
Using the two binary features \emph{income} and \emph{sex} of the Adult data set \cite{adult_2}, we implement the Laplace mechanism according to Section \ref{subsec:laplace}. We average the simulation over $100$ iterations. Each iteration, the dataset is randomly shuffled. Then the first $m$ samples of the shuffled dataset are used to compute $\hat P_X$. For different values of $\varepsilon$, the required scale parameter $b$ is determined based on \eqref{eq:uncertainlaplace}, where the probability of failure is set to $\delta = 10^{-9}$. For each $m$ and each value of $\varepsilon$, the data is then perturbed with Laplace noise given the calculated parameter. Let $\{d_j\}_{i=j}^m$ be the private data samples, and let $\{d'_j\}_{j=1}^m$ be the privatized samples obtained by the perturbation with Laplacian noise. Further, recall that $\{x_1,x_2\} = \{-1,1\}$, and based on this, also define $\{y_1,y_2\} \coloneqq \{-1,1\}$. To obtain binary features $\{d''_j\}_{j=1}^m$ form the privatized data samples $d'_j$ (which are arbitrary real numbers after the perturbation with the Laplacian noise), we implement a binary decision according to 
\begin{equation}
    d''_j \coloneqq \begin{cases}
        -1, &\text{ if }d'_j <0 \\
        1, &\text{ otherwise}
    \end{cases}, \quad \forall j = 1,\dots, m.
\end{equation}
To evaluate the statistical similarity between the private and privatized data, we define the empirical mutual information as 
\begin{equation}
   \hat{I}(X;Y) \coloneqq \sum_{i=1}^2 \sum_{k=1}^2 \frac{f(x_i,y_k)}{m}\log\frac{Nf(x_i,y_k)}{f(x_i)f(y_k)},
\end{equation}
where $f(x_i,y_k)$ denotes the frequency of the tuple $(x_i,y_k)$ in the sequence $\{(d_1,d''_1),\dots,(d_m,d_m'')\}$, and $f(x_i)$ and $f(y_k)$ denote the frequencies of symbols $x_i$ and $y_j$ in the sequences $\{d_j\}_{j=1}^m$ and $\{d_j''\}_{j=1}^m$, respectively.  
In Figure \ref{fig:numex_laplace}, we compare the empirical mutual information between the true and perturbed data points with the noise optimized for PML to the empirical mutual information between the true and perturbed data when the Laplace noise is chosen to satisfy $\varepsilon$-LDP (that is, $b=\nicefrac{2}{\varepsilon}$). For both features, it becomes clear that incorporating the additional information about the data distribution of the data following the presented procedure allows us to significantly increase the data utility compared to LDP at the same privacy parameter. This comes only at the cost of the extremely small probability of failure $\delta = 10^{-9}$. As expected, the utility gain increases with the number of samples $m$ available for distribution estimation.

To evaluate the sensitivity of the utility increase to the chosen probability of failure $\delta$, we fix $\varepsilon$ and compare the empirical mutual information over the available data samples $m$ for different choices of $\delta$ in Figure \ref{fig:laplaceoverdelta}. As expected, the achievable utility decreases with decreasing choices of $\delta$. However, even for extremely small values of $\delta$, the utility increase in the presented binary setting is substantial compared to LDP (equivalent to choosing $\delta=0$, see Remark \ref{rem:laplacereducestoldp}).

\subsubsection{Gaussian Mechanism}
\begin{figure*}
\begin{subfigure}{.49\textwidth}
    \centering
    \includegraphics[width=1.03\linewidth]{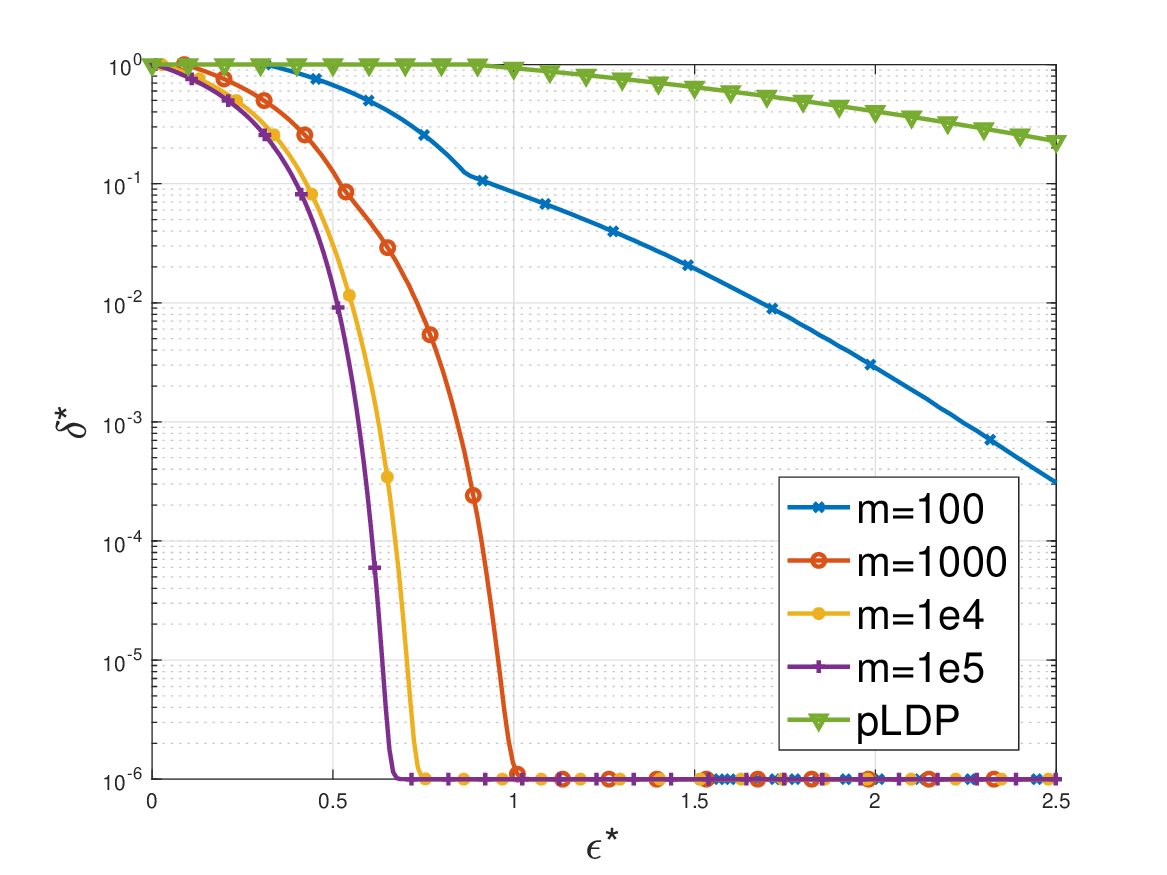}
    \caption{$\sigma = 1.5$}
    \label{fig:gaussexample:subfig:1p5}
\end{subfigure}
\begin{subfigure}{.49\textwidth}
    \centering
    \includegraphics[width=1.03\linewidth]{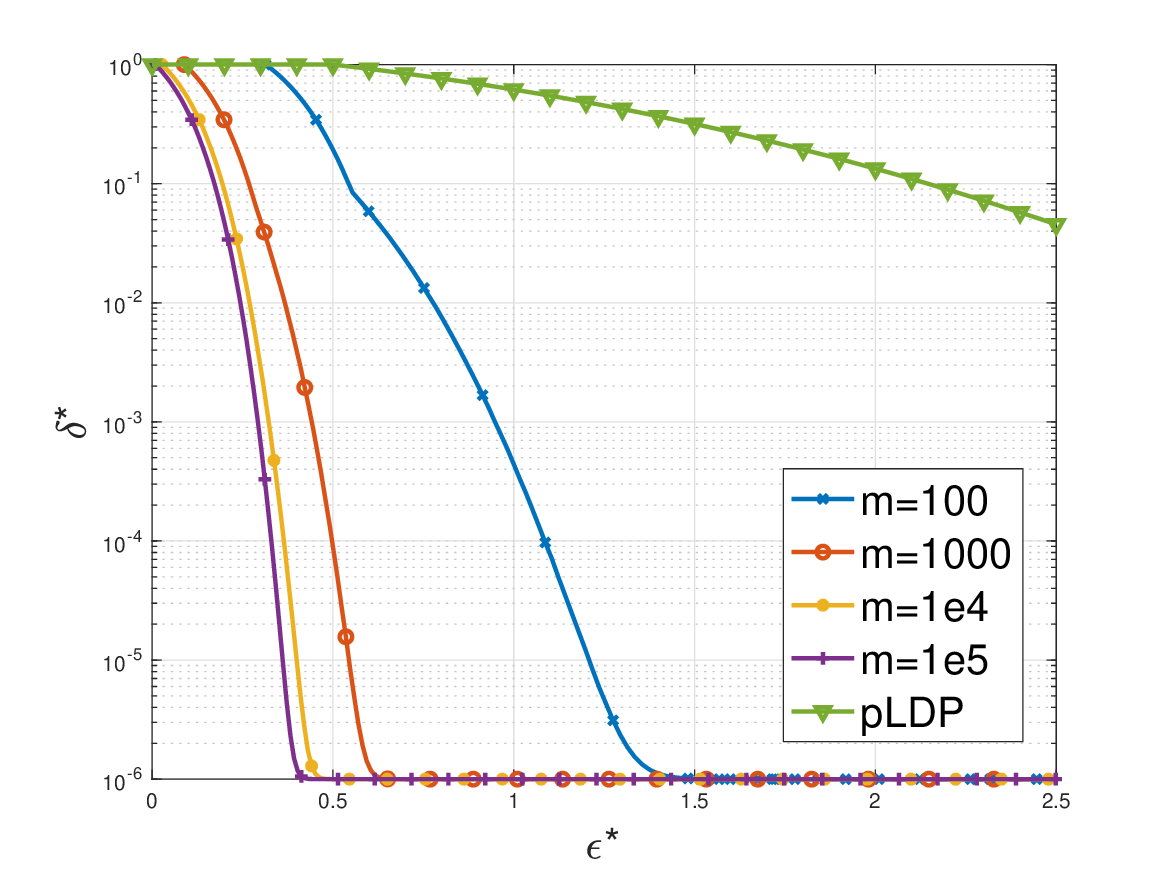}
    \caption{$\sigma = 2$}
    \label{fig:gaussexample:subfig:2}
\end{subfigure}
    \caption{$(\varepsilon^*,\delta^*)$-curves for the binary Gaussian mechanism in Section~\ref{subsec:gauss} given that $\hat P_X = (0.5,0.5)$. The radius $\beta$ is picked such that for each $m$, $\delta_2 = 10^{-6}$ according to \eqref{eq:binarybeta}. The optimization in \eqref{eq:gaussianopt} was solved numerically by a exhaustive grid search over the set $\mathcal B_\beta(\hat P_X)$. The lower bound $\delta^* \geq \delta_2$ is visible for all chosen $m \neq 100$. For the chosen $\sigma$, the curves are compared to the binary $(\varepsilon^*,\delta^*)$-pLDP mechanism, a local adaption of the mechanism given in \cite[Theorem 8]{zhao2019reviewingimprovinggaussianmechanism}.}
    \label{fig:gaussexample}
\end{figure*}
\begin{figure}
    \centering
    \includegraphics[width=1.03\linewidth]{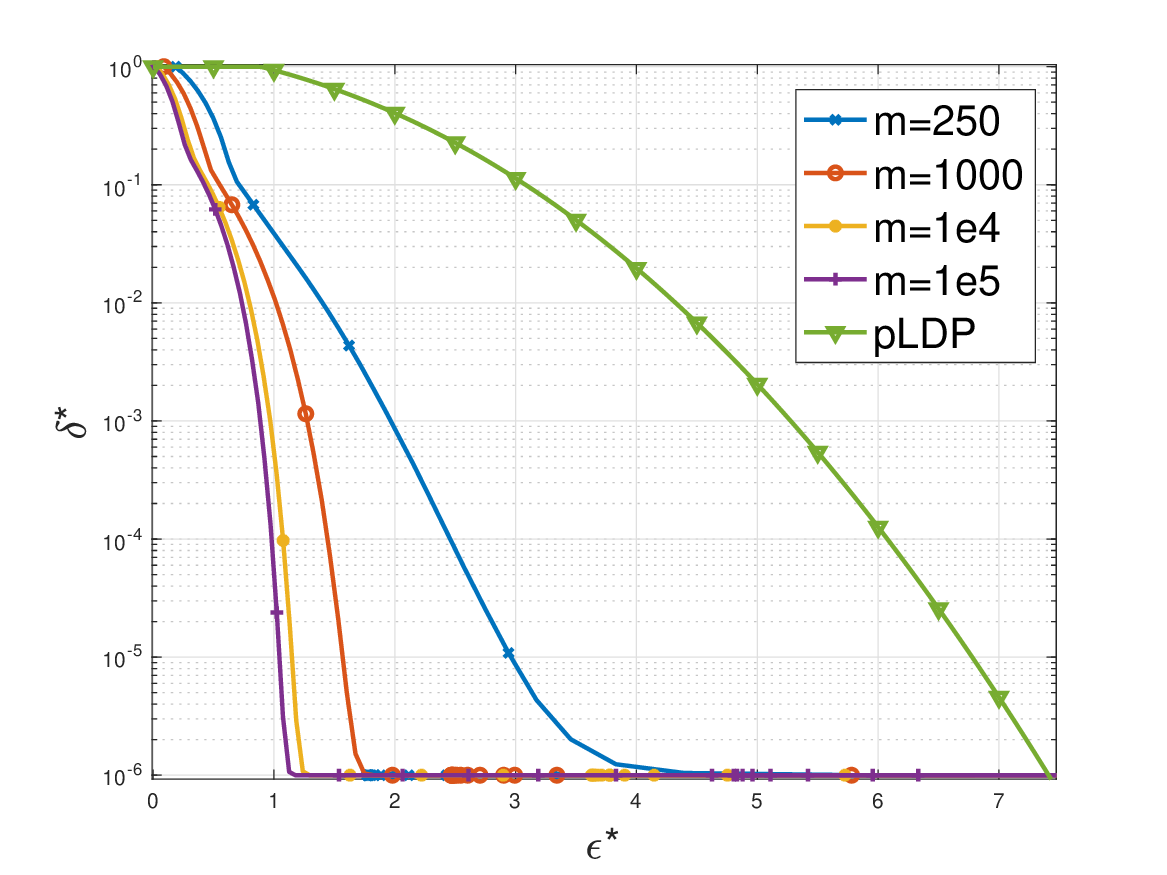}
    \caption{The same simulation as in Figure \ref{fig:gaussexample}, but given a skewed estimated distribution $\hat P_X = (0.3,0.7)$, with $\sigma = 1.5$.}
    \label{fig:skewedgauss}
\end{figure}
 Next, we evaluate the Gaussian mechanism, and compare it to Gaussian mechanisms for a probabilistic version of $(\varepsilon,\delta)$-LDP. Note in the usual way of defining $(\varepsilon,\Bar \delta)$-LDP, the parameter $\Bar \delta$ has a different interpretation than the $\delta$ in the framework presented above. In particular, if a mechanism $P_{Y|X}$ satisfies $(\varepsilon,\Bar \delta)$-LDP for some $\varepsilon>0$ and $\Bar \delta \in [0,1]$, this implies that
 \begin{equation}
 \label{eq:LDPdelta}
     P_{Y|X=x}(y) \leq e^{\varepsilon}P_{Y|X=x'}(y) + \Bar \delta \quad \forall x,x',y.
 \end{equation}
 While the probability of failure $\delta$ of an $\varepsilon$-LDP guarantee is an upper bound to the $\Bar \delta$ defined according to \eqref{eq:LDPdelta}, this upper bound is often not tight \cite{meiser2018approximate}, hindering a direct comparison. We therefore compare the $(\varepsilon,\delta)$-curves of the Gaussian mechanisms for PML to the $(\varepsilon,\delta)$-curve of the mechanism presented in \cite[Theorem 8]{zhao2019reviewingimprovinggaussianmechanism}: There, the authors define a Gaussian mechanism satisfying $(\varepsilon,\delta)$-pDP, that is, an additive Gaussian noise mechanism for which an $\varepsilon$-DP guarantee fails with probability at most $\delta$. We adapt this mechanism to our binary local setting by choosing the sensitivity $\Delta = 2$.\footnote{To see why this reduces the central setting to our local setting, for any two data points $d$ and $d'$, let $D = \{d\}$, $D' =\{d'\}$ and define the \say{query}$q: D \to \mathbb R$ to be the identity function, that is $q(D)=D$. It follows that $\max_{D\sim D'}||q(D)-q(D')||_2 = \max_{x,x'}||x-x'||_2 =2$, and bounding the ratio of outcomes for each pair of neighboring databases is equivalent to bounding that ratio for any two data points $x,x'$.} The mechanism in \cite[Theorem 8]{zhao2019reviewingimprovinggaussianmechanism} then satisfies \emph{$(\varepsilon,\delta)\text{-}$probabilistic LDP} (pLDP), that is, 
 \begin{equation}
     \mathbb P_{Y}\Bigg[\max_{x,x'}\frac{P_{Y|X=x}}{P_{Y|X=x'}} > e^{\varepsilon}\Bigg] < \delta,
 \end{equation}
 whenever we choose $G \sim \mathcal N(0,\sigma_{\text{pLDP}}^2)$, with
 \begin{equation}
     \sigma_{\text{pLDP}} = \frac{\sqrt{2}\big(f + \sqrt{f^2 + \varepsilon^*}\big)}{\varepsilon^*},
 \end{equation}
 where $f \coloneqq \text{inverfc}(\delta)$ is the inverse of the complementary error function at value $\delta$. The resulting $(\varepsilon^*,\delta^*)$ tradeoff computed using the technique in Section \ref{subsec:gauss} is compared to the corresponding pLDP mechanism in Figure~\ref{fig:gaussexample} for different values of $m$. We see that incorporating the distribution estimate in the privacy guarantee strongly improves the achievable $(\varepsilon,\delta)$-tradeoffs. In practice, this implies that a similar privacy guarantee to $(\varepsilon,\delta)$-probabilistic local differential privacy can be achieved with the presented techniques using lower noise variance, hence directly increasing data utility. Further, the difference between Figure \ref{fig:gaussexample:subfig:1p5} and Figrue \ref{fig:skewedgauss} once again illustrates the fact that an LDP guarantee is equivalent to a PML guarantee for all possible priors, with the worst-case privacy loss realized for the degenerate distributions with all probability mass concentrated at one outcome in the limit. Hence, as the true data-generating distribution $P_X$ becomes more skewed, the utility gain with the PML method presented in this paper decreases.

\section{Related Work}
\label{sec:relatedwork}
Information-theoretic approaches have played a central role in the development of alternative privacy measures to (L)DP. Numerous studies deal with the use of mutual information as a privacy measure~\cite{asoodeh2014notes,wang2016relation,makhdoumi2014information,liao2017hypothesis,rassouli2021perfect}. Beyond mutual information, generalized measures such as $f$-information have been explored, with specific instances including $\chi^2$-information~\cite{du2017principal,wang2019privacy} and total variation privacy~\cite{rassouli2019TVDprivacy}. Per-letter measures based on $f$-divergences have also been proposed for assessing privacy~\cite{zamani2021data,zamani2021design,zamani2023privacy}, along with measures based on guessing probability~\cite{asoodeh2018estimation} or guesswork~\cite{massey1994guessing,malone2004guesswork}. As an analouge to LDP, local information privacy (LIP) \cite{6483382,jiang2021LIPcontextaware} imposes symmetric bounds on the information density, while asymmetric local information privacy (ALIP)\cite{zarrabian2023lift,zarrabian2022asymmetric} extends LIP by allowing different upper and lower bounds. Other notable advances include tunable privacy measures, such as $\alpha$-leakage and maximal $\alpha$-leakage~\cite{liao2019tunable}, which extend maximal leakage by introducing a class of loss function parameterized by $\alpha$. More recent developments in this area include maximal $(\alpha, \beta)$-leakage~\cite{gilani2024unifying}, and maximal $g$-leakage for multiple guesses~\cite{kurri2023operational}.

In the QIF framework, recent works examine the effect of imperfect statistical information \emph{at the adversary} \cite{sakib2023variations,sakib2023measures,sakib2024information}, alternative privacy measures based on cryptographic advantage \cite{chatzikokolakis2023bayes}, the properties of methods in federated learning \cite{biswas2024bayes}, as well as the shuffle model for differential privacy \cite{jurado2023analyzing}. Connections between LDP and QIF are explored in \cite{10664297}. Privacy in general learning problems is investigated in, e.g., \cite{del2023bounding,aubinais2023fundamental,9611429,9531956}. 
Finally, a rich body of work has focused on the design of privacy mechanisms tailored to specific privacy measures. For example, bespoke mechanisms have been developed for local differential privacy~\cite{kalantari2018hamming,extremalmechanismLong}, maximal leakage~\cite{wu2020optimal,saeidian2021hamming}, total variation privacy~\cite{rassouli2019TVDprivacy}, and LIP~\cite{hsu2019information,sadeghi2021properties,jiang2021LIPcontextaware}. In \cite{lopuhaa2024mechanisms,goseling2022robust}, mechanisms are designed for \emph{robust local differential privacy}, a specific instance of Pufferfish privacy \cite{kiferPufferfishFrameworkMathematical2014}.

\section{Conclusion and Discussions}
\label{sec:conclusion}
This paper provides a comprehensive framework for dealing with data distribution uncertainty in PML privacy assessment and mechanism design. To the best of our knowledge, this constitutes the first work that demonstrates how the data-distribution dependent PML measure can be used without assuming the data-generating distribution to be known. The theoretical tools developed in Section \ref{sec:PMLforsets} can be used to assess leakage if there is \emph{any} knowledge about the set which the data-generating distribution is drawn from. This can be used to obtain guarantees that hold for any kind of distributional uncertainty (including no knowledge of the data distribution). In the following sections, we focused our analysis on the specific case in which uncertainty stems from empirical distribution estimation. Specializing to this use-case enables us to derive (a) more concrete bounds on the leakage increase that can be used to obtain high-probability $\varepsilon$-PML guarantees, and (b) methods for computing optimal mechanisms for sub-convex utility. Finally, the results in Section \ref{sec:examples} show how the presented framework can be applied to additive noise mechanisms. We show numerically that the incorporation of distribution estimates in the privacy guarantees according to our framework can directly improve data utility when compared to the distribution-oblivious LDP approach. While the presented examples are minimal in terms of cardinality, the substantial utility gain they achieve compared to LDP should motivate further research into making these methods practically available whenever sufficient data is available for distribution estimation.

%The framework presented in this work allows to flexibly incorporate prior assumptions; for the worst-case, i.e., if no restriction of the set of possible priors is possible, we show that the mechanisms reduce to the common mechanisms for LDP. This demonstrates how the PML measure can be used to trade off privacy and utility considerations via the amount of prior assumptions: In scenarios in which strict privacy is priority, and incorporating prior knowledge is not an option, the framework yields LDP guarantees. In cases where higher utility is demanded, we can flexibly address this demand by restriction of the set of priors.

\textbf{Future work}: It is worth pointing out that the considered uncertainty sets from distribution estimation in Theorem \ref{thm:l1gamma} are just one practically relevant example. Other large deviation bounds (e.g. in $\ell_\infty$) might be used to derive similar results for different use-cases by the same methodology. Other ways to incorporate domain knowledge could include parameterized models, e.g., normal with unknown mean and variance. Analyzing the properties of the uncertainty set under such assumptions would be a first step to obtain effective mechanisms in these scenarios. More broadly, the exploration of mechanisms tailored towards a specific application, e.g., privacy guarantees provided by quantization noise or the generalization capabilities of a learning system, in the presented framework can potentially lead to alternative randomization strategies that can be used in these specific contexts, potentially leading to synergies with other parts of a data-processing pipeline.

\bibliographystyle{IEEEtranN}
\footnotesize
%\balance
\bibliography{main}

\normalsize
%\onecolumn
\appendices
\section{Proof of Theorem \ref{thrm:lipschitz}}
\label{app:lipschitzproof}
  Fix two arbitrary distributions $P_X,Q_X \in \mathcal P$. We have that 
    \begin{align}
        &|\epsilon_{\min}(P_X)-\epsilon_{\min}(Q_X)| \\[.5em]&= |\max_y \ell_{P_X}(X\to y) - \max_y \ell_{Q_X}(X\to y)| \\[.5em]&\leq \max_y \,\biggl|\,\log \frac{\max_x P_{Y|X=x}}{\sum_x P_X(x)P_{Y|X=x}(y)}  \\ & \qquad \qquad \qquad - \log \frac{\max_x P_{Y|X=x}}{\sum_x Q_X(x)P_{Y|X=x}(y)}\,\biggr| \\[.7em]&= \max_y \,\biggl|\log \frac{\sum_x Q_X(x)P_{Y|X=x}(y)}{\sum_x P_X(x)P_{Y|X=x}(y)}\,\biggr|. 
    \end{align}
    Now, note that for any $a,b > 0$, $|\log\frac{a}{b}| \leq \frac{|a-b|}{\min\{a,b\}}$. Hence
    \begin{align}
        &\max_y \,\biggl|\log \frac{\sum_x Q_X(x)P_{Y|X=x}(y)}{\sum_x P_X(x)P_{Y|X=x}(y)}\,\biggr| \\&\leq \max_y \frac{|\sum_x P_{Y|X=x}(y) (Q_X(x)-P_X(x))|}{\min\{\sum_x Q_X(x)P_{Y|X=x}(y),\sum_x P_X(x)P_{Y|X=x}(y)\}} \\ &\leq \max_y \frac{\sum_x|P_{Y|X=x}(y)(Q_X(x)-P_X(x))|}{\min\{\sum_x Q_X(x)P_{Y|X=x}(y),\sum_x P_X(x)P_{Y|X=x}(y)\}} \\&\leq \max_y \frac{\max_x P_{Y|X=x}(y)||P_X - Q_X||_1}{\min\{\sum_x Q_X(x)P_{Y|X=x}(y),\sum_x P_X(x)P_{Y|X=x}(y)\}} \\&\leq \exp\biggl(C(P_{Y|X},\mathcal P)\biggr)||P_X-Q_X||_1.  
    \end{align}
    Where the second step follows from the triangle inequality, and the third step follows from $P_{Y|X=x}(y) \leq \max_x P_{Y|X=x}(y)$. The final inequality follows from the definition of the local leakage capacity by
    \begin{align}
        \min&\{\sum_x Q_X(x)P_{Y|X=x}(y),\sum_x P_X(x)P_{Y|X=x}(y)\} \\&\geq \inf_{P_X\in\mathcal P} \sum_x P_X(x)P_{Y|X=x}(y),
    \end{align}
    which holds since both $Q_X$ and $P_X$ were chosen in $\mathcal P$. Since $\max_x P_{Y|X=x}(y)$ is independent of the data distributions $P_X$ and $Q_X$, the statement follows. \qed

\section{Proof of Theorem \ref{thm:l1gamma}}
\label{app:l1gammaproof}
 Let $Q_X(x) = P_X(x) + \omega(x)$. Then we have that $\sum_x \omega(x) = 0$ and $||\omega||_1 \leq \beta$ for any $Q_X \in \mathcal B_\beta(P_X)$. Hence, 
    \begin{align}
          &\exp(D_{\infty}(P_Y||Q_Y)) = \max_y \, \frac{\sum_x P_{Y|X=x}(y) P_X(x)}{\sum_x P_{Y|X=x}(y)(P_X(x) + \omega(x))} \\[.5em]&= \max_y\,\frac{P_Y(y)}{P_Y(y) + \sum_x P_{Y|X=x}(y)\omega(x)} \\[.7em]&\leq \max_y\,\frac{P_Y(y)}{P_Y(y) - \frac{\beta}{2}(\max_x P_{Y|X=x}(y) - \min_x P_{Y|X=x}(y))} \\&\leq \max_y \frac{P_Y(y)}{P_Y(y) - \frac{\beta}{2}P_Y(y)e^{\epsilon}} = \frac{1}{1-\frac{\beta e^{\epsilon}}{2}},
          \label{eq:l1uncertaintytrick}
    \end{align}
    where the last inequality follows from the assumption that $P_{Y|X}$ satisfies $\epsilon$-PML with respect to $P_X$, and the first inequality follows by choosing $\omega$ such that all negative weight is put to the maximum channel entry at $y$, and all positive weight is put to the minimum. If $\min_x P_{Y|X=x}(y) = 0$, the last inequality holds with equality for each $y$, and hence the bound is tight.

    For the case $k=1$, we have $\epsilon < \log \frac{1}{1-p_{\min}}$, and we can hence refine the bound on the minimum elements in the channel columns using \cite[Proposition 1]{10619510} as
    \begin{equation}
        \min_{x\in\mathcal X} P_{Y|X=x}(y) \geq \frac{1-e^{\epsilon}(1-p_{\min})}{p_{\min}}P_Y(y), \quad \forall y\in \mathcal Y,
    \end{equation}
    where $p_{\min} \coloneqq \min_{x\in\mathcal X}P_X(x)$. This yields the bound for the case $k=1$. As shown in \cite{10619510}, the lower bound on the minimum elements in the channel matrix is tight for extremal mechanisms as described in Section \ref{sec:prelim}.
    
    To identify the cases in which equality holds for $k>1$, note that the chosen $\omega(x)$ in the first inequality corresponds to a valid probability distribution in $\mathcal B_\beta(P_X)$, and therefore yields a sharp inequality. To obtain equality in \eqref{eq:Spx_l1bound}, it is therefore enough to have $\min_x P_{Y|X=x}(y) = 0$ for all $y \in \mathcal Y$.\qed

\section{Proof of Theorem \ref{thm:optimalbinary}}
\label{app:proofpotimalbinary}
   Let $\pi^{(1)} \coloneqq [p+\nicefrac{\beta}{2}, 1-p-\nicefrac{\beta}{2}]$ and $\pi^{(2)} \coloneqq [p-\nicefrac{\beta}{2},1-p+\nicefrac{\beta}{2}]$ denote the two extreme points of the uncertainty set, i.e., $\mathcal P^* = \{\pi^{(1)},\pi^{(2)}\}$. By Corollary \ref{corr:extremalSet}, we can obtain the region of all mechanisms satisfying $\epsilon$-PML for both $\pi^{(1)}$ and $\pi^{(2)}$ by intersecting the fixed-distribution optimization region polytopes for the two distributions as given in \cite[Lemma 1]{grosse2023extremal}. The maximizing solution will then be one of the two non-trivial vertices of this intersection.\footnote{The two trivial vertices are given by $p_{11} = p_{21} \in \{0,1\}$, see \cite{grosse2023extremal} for details.} Let the general binary mechanism be denoted by 
    \begin{equation}
        P_{Y|X} = \begin{bmatrix}
            p_{11} & 1-p_{11} \\
            p_{21} & 1-p_{21}
        \end{bmatrix}.
    \end{equation} We prove this theorem by calculating the intersection point of the two constraints that constitute an extreme point of the intersection of the two optimization regions for $\pi^{(1)}$ and $\pi^{(2)}$, respectively. Without loss of generality, assume that $p_{11} \geq p_{21}$ and consequently $1-p_{11} \leq 1-p_{21}$. For a general data distribution $P_X$, the extremal inequalities then become 
    \begin{equation}
    \label{eq:extremaleqforward}
        \frac{p_{11}}{p_{11}P_X(x_1) + p_{21}P_X(x_2)} \leq e^{\epsilon}
    \end{equation}
    and 
    \begin{equation}
    \label{eq:extremaleqbackward}
        \frac{1-p_{21}}{(1-p_{11})P_X(x_1) + (1-p_{21})P_X(x_2)} \leq e^{\epsilon}.
    \end{equation}
    By the assumption that $p_{11} \geq p_{21}$, we can further conclude that \eqref{eq:extremaleqforward} is achieved with equality for $\pi^{(2)}$, while \eqref{eq:extremaleqbackward} is achieved with equality for $\pi^{(1)}$. For the respective other data distribution, the leakage value is strictly smaller for each equation. Substituting $\pi^{(2)}$ into \eqref{eq:extremaleqbackward} and assuming equality we obtain
    \begin{equation}
    \label{eq:p21funcofp11}
         p_{21} = \biggl(\frac{1-e^{\epsilon}\pi_1^{(2)}}{e^{\epsilon}\pi_2^{(2)}}\biggr)p_{11}.
    \end{equation}
    Further substituting \eqref{eq:p21funcofp11} into \eqref{eq:extremaleqbackward} we have
    \begin{equation}
        p_{11} = \frac{(1-e^{\epsilon})e^{\epsilon}\pi^{(2)}_2}{1-e^{\epsilon}(\pi^{(1)}_2 + \pi_1^{(2)}-e^{\epsilon}(\pi_1^{(2)}\pi_2^{(1)}-\pi_1^{(1)}\pi_2^{(2)}))}.
    \end{equation}
    Making use of the definitions of $\pi^{(1)}$ and $\pi^{(2)}$, this expression simplifies to 
    \begin{equation}
        p_{11} = \frac{e^{\epsilon}(1-p-\nicefrac{\beta}{2})}{1+\beta e^{\epsilon}},
    \end{equation}
    and consequently, we obtain $p_{21}$ from the relation \eqref{eq:p21funcofp11} as
    \begin{equation}
        p_{21} = 1-\frac{e^{\epsilon}(p+\nicefrac{\beta}{2})}{1+\beta e^{\epsilon}},
    \end{equation}
    which completes the proof. \qed

\section{Proof of Proposition \ref{prop:LCCP}}
\label{app:proofLCCP}
Recall that we denote the elements of the mechanism matrix as $P_{Y|X=x_i}(y_j) = p_{ij}$. By definition, $P_{Y|X} \in \mathcal M(\varepsilon,\mathcal P)$ implies that $C(P_{Y|X},\mathcal P) \leq \varepsilon$. Analogously to Appendix \ref{app:l1gammaproof}, let $P_X(x) = \hat P_X(x) + \omega(x)$ for all $x \in \mathcal X$, with $||\omega||_1 \leq \beta$ and $\sum_x\omega(x) = 0$. Then we can write
    \begin{align}
        &C(P_{Y|X},\mathcal P) = C\big([p_{ij}]_{ij},\mathcal P\big) \\&= \log\max_{j\in [N]} \frac{\max_{i\in[N]}p_{ij}}{\sum_{i=1}^N\hat P_X(x_i)p_{ij} + \sum_{i=1}^N \omega(x_i)p_{ij}} \\[2em]
        &=\log\max_{j\in[N]} \frac{\max_{i\in[N]}p_{ij}}{\sum_{i=1}^N\hat P_X(x_i)p_{ij} - \frac{\beta}{2}\Big(p_j^{\max}-p_j^{\min}\Big)},
    \end{align}
    where $p_j^{\max} \coloneqq \max_{i\in[N]}p_{ij}$ and $p_j^{\min} \coloneqq \min_{i\in[N]}p_{ij}$. That is, $C(P_{Y|X},\mathcal P) \leq \varepsilon$ implies that $\forall j \in [N]$,
    \begin{equation}
        \bigg(1+ \frac{\beta e^{\varepsilon}}{2}\bigg)\max_{i\in[N]}p_{ij} \leq e^{\varepsilon}\bigg(\sum_{i=1}^N\hat P_X(x_i)p_{ij} + \frac{\beta}{2}\min_{i\in[N]}p_{ij}\bigg).
    \end{equation}
    The remaining two constraints follow from the requirement that $[p_{ij}]_{ij}$ is a row-stochastic matrix.
\qed

\section{Proof of Propositions \ref{prop:Laplaceleakage}}
\label{app:proofLaplaceLeakage}
 For the presented Laplace mechanism, the PML is given as 
    \begin{align}
        &\ell(X\to y) \\&= \log \,\frac{\max_{x\in\{x_1,x_2\}}\,\exp(-\frac{|y-x|}{b})}{P_X(x_1)\exp(-\frac{|y-x_1|}{b}) + P_X(x_2)\exp(-\frac{|y-x_2|}{b})}.
    \end{align}
    Recall that $\{x_1,x_2\}= \{-1,1\}$. We get a case distinction around $y=0$ for the maximization 
    \begin{equation}
        \max_{x\in \{x_1,x_2\}}\exp(-\frac{|y-x|}{b}) = \begin{cases}
            \exp(-\frac{|y-1|}{b}), \text{ if } y \geq 0. \\ \exp(-\frac{|y+1|}{b}), \text{ if } y<0.
        \end{cases}
    \end{equation}
    First, consider the cases $y<-1$ and $y>1$. In these cases, we can bound the PML as
    \begin{align}
        &\exp \big(\ell(X\to y)\big) \\[.7em]&\leq
            \begin{cases}
                \frac{\exp(-\frac{(y-1)}{b})}{P_X(x_2)\exp(-\frac{(y-1)}{b})+P_X(x_1)\exp(-\frac{(y+1)}{b})}, \text{ if } y >1\\ \frac{\exp(-\frac{(y+1)}{b})}{P_X(x_2)\exp(-\frac{(y-1)}{b})+P_X(x_1)\exp(-\frac{(y+1)}{b})}, \text{ o.w.,}
            \end{cases} \\[.7em]&\leq \begin{cases} \frac{e^{2/b}}{e^{2/b}P_X(x_2)+P_X(x_1)}, \text{ if } y > 1, \\ \frac{e^{2/b}}{e^{2/b}P_X(x_1)+P_X(x_2)}, \text{ o.w.}\end{cases} \label{eq:laplaceex:p_xdepbound}
   \end{align}
Further in the cases $-1 \leq y \leq 1$ we have
    \begin{equation}
        \exp(\ell(X\to y)) \leq \begin{cases}
            \frac{1}{P_X(x_1)e^{-2/b}+P_X(x_2)}, \text{ if } y \geq 0, \\
            \frac{1}{P_X(x_2)e^{-2/b}+P_X(x_1)}, \text{ o/w.}
        \end{cases}
    \end{equation}
    Hence, by picking the maximizing case in \eqref{eq:laplaceex:p_xdepbound}, the leakage is bounded as
    \begin{equation}
        \ell(X\to y) \leq \frac{2}{b} - \log(e^{2/b}p_{\min} + 1-p_{\min})
    \end{equation} 
    for all outcomes $y \in \mathcal Y$. \qed

%\newpage
%\section{Additional Figures}
%\input{sections/figures}

\end{document}